\newtheorem{assumption}{Assumption}
\newcommand{\opt}[1]{{#1}^\ast}
\DeclareMathOperator*{\mini}{\text{Minimize}}
\DeclareMathOperator*{\argmin}{{ \sf{argmin}}}
\newcommand{\powb}[2]{{\left({#1}\right)}^{#2}}
\newcommand{\ic}{\mathbf{j}}
\newcommand{\br}[1]{\left({#1}\right)}
\newcommand{\R}{\mathbb{R}}
\newcommand{\Rep}[1]{\mathrm{Re}\br{#1}}
\newcommand{\Imp}[1]{\mathrm{Im}\br{#1}}
\newcommand{\herm}[1]{\overline{#1}}
\newcommand{\Com}{\mathbb{C}}
\newcommand{\Ical}{\mathcal{I}}
\newcommand{\Desc}[1]{\mathrm{Desc}\br{#1}}
\newcommand{\fp}{S}
\newcommand{\fpp}{P}
\newcommand{\fpq}{Q}
\newcommand{\zr}{r}
\newcommand{\zi}{x}
\newcommand{\yp}{y}
\newcommand{\Inj}{\mathbf{S}}
\newcommand{\cur}{\mathfrak{i}}
\newcommand{\Int}{\mathbb{I}}
\DeclareMathOperator*{\Ext}{\text{Extremize}}
\newcommand{\lb}[1]{{{#1}}^{L}}
\newcommand{\ub}[1]{{{#1}}^{U}}
\newcommand{\vs}{\mathfrak{v}}
\newcommand{\xs}{\beta}
\newcommand{\Par}[1]{\mathcal{P}\br{#1}}
\newcommand{\Chi}[1]{\mathcal{C}\br{#1}}
\newcommand{\Xcal}{\mathcal{X}}
\newcommand{\injp}{p}
\newcommand{\injq}{q}
\newcommand{\inj}{s}
\begin{document}

\title{Graphical models for optimal power flow
}


\author{Krishnamurthy Dvijotham         \and
		Michael Chertkov \and 
        Pascal Van Hentenryck \and 
        Marc Vuffray \and
        Sidhant Misra \thanks{	
	This work was supported by Skoltech through collaboration agreement 1075-MRA. The work at LANL was carried out under the auspices of the National Nuclear Security Administration of the U.S. Department of Energy under Contract No. DE-AC52-06NA25396.}
}


\institute{Krishnamurthy Dvijotham \at
              Computing and Mathematical Sciences, California Institute of Technology\\
              \email{dvij@cs.washington.edu}           
           \and
           Michael Chertkov, Marc Vuffray Sidhant Misra \at
              T-Divison \& Center for Nonlinear Studies, 
              Los Alamos National Laboratory
            \and 
            Pascal Van Hentenryck \\
            Industrial and Operations Engineering, University of Michigan, Ann Arbor  
}

\date{Received: date / Accepted: date}

\maketitle

\begin{abstract}
Optimal power flow (OPF) is the central optimization problem in electric power grids. Although solved routinely in the course of power grid operations, it is known to be strongly NP-hard in general, and weakly NP-hard over tree networks. In this paper, we formulate the optimal power flow problem over tree networks as an inference problem over a tree-structured graphical model where the nodal variables are low-dimensional vectors. We adapt the standard dynamic programming algorithm for inference over a tree-structured graphical model to the OPF problem. Combining this with an interval discretization of the nodal variables, we develop an approximation algorithm for the OPF problem. Further, we  use techniques from constraint programming (CP) to perform interval computations and adaptive bound propagation to obtain practically efficient algorithms. Compared to previous algorithms that solve OPF with optimality guarantees using convex relaxations, our approach is able to work for arbitrary distribution networks and handle mixed-integer optimization problems. Further, it can be implemented in a distributed message-passing fashion that is scalable and is suitable for ``smart grid'' applications like control of distributed energy resources. We evaluate our technique numerically on several benchmark networks and show that practical OPF problems can be solved effectively using this approach.

\keywords{Constraint Programming \and Graphical Models \and Power Systems}
\end{abstract}

\section{Introduction}

In this paper, we study a novel application of well-known AI
techniques (constraint programming and inference in graphical models)
to a difficult engineering problem - the optimization of resources in
a power distribution network. As larger amounts of renewable
generation sources (solar/wind) are incorporated into the power grid,
the availability of generation capacity becomes uncertain (due to
dependence on unpredictable weather phenomena). However, the physics
of the power grid imply the need to maintain real-time balance between
demand and generation. Exploiting the flexibility of electricity
demand becomes central for solving this problem, which is the core of
the ``smart grid'' vision ({\tt https://www.smartgrid.gov/}). Thus,
future grids will require efficient algorithms that can process data
from millions of consumers and efficiently compute optimal ways of
exploiting demand-side flexibility while respecting engineering
constraints. In this paper, we develop a novel algorithm guaranteed to
compute an approximately optimal solution for this problem in
polynomial time.

More concretely, we study the Optimal Power Flow (OPF) problem
\cite{carpentier1962contribution}. At an abstract level, one can view
this as a network flow optimization:

\begin{align*}
	\text{ minimize } & \text{ cost of generating electicity } \\
	\text{ subject to } & \text{ conservation of flows, flows consistent with voltages,} \\
						& \text{ demands are met}, \text{ engineering limits are respected}
\end{align*}
where the engineering limits typically refer to capacities of the
transmission lines in the power grid and limits on voltages.

However, as opposed to a standard network-flow problem for which there
are well-known efficient algorithms, the physics of electric power
flow make the above problem challenging. Electrical flows cannot be
arbitrary, but are driven by differences in voltages, so that the flow
on a transmission line is a nonlinear function of the voltage difference
between the two ends of the line. Due to this nonlinear constraint,
the OPF problem becomes non-convex. In fact, it is strongly
NP-hard over arbitrary networks \cite{bienstock2015strong} and weakly
NP-hard over tree-structured networks \cite{PascalHardness}. The special
case of tree-structured networks is particularly important in the
context of the smart grid as distribution networks (which connect
high-voltage long-distance power transmission network to individual
consumers) are typically tree-structured. In order to exploit
demand side flexibility, we will need efficient OPF algorithms on
tree-networks. 

In recent years, several researchers have studied applications of
convex relaxation techniques to the OPF problem
\cite{low2014convex}. In particular, for the tree OPF problem, elegant
results have been developed characterizing the conditions under which
convex relaxations of OPF are guaranteed to be exact
\cite{BaseQCQP,lavaei2013geometry,sojoudi2014exactness,gan2015exact}. The
most general results are presented in \cite{gan2015exact} and cover
several practical instances of OPF over tree networks. However, the
conditions for exactness require assumptions that are incompatible
with the above ``smart grid'' applications (absence of discrete
variables, limits on flow reversal, ...).

In this paper, we develop a new approach to solving optimal power flow
on tree-structured networks by using techniques from Constraint
Programming (CP) and Graphical Models (GM). We first restate the tree OPF
problem as an inference problem over a tree-structured factor
graph. Based on this representation, we develop an algorithm that
computes a super-optimal approximately feasible solution. The running time of the algorithm is linear
in the size of the network and polynomial in $\frac{1}{\epsilon}$,
where $\epsilon$ is the error tolerance allowed. Relative to the
existing algorithms based on convex relaxations, the approach we
develop has the following advantages:
\begin{itemize}
\item It can handle mixed-integer optimization problems (involving
  both discrete and continuous variables) and hence is capable of
  addressing load-control and distributed generation applications with
  discrete components such as on/off constraints, switching
  transformer taps, and capacitor banks.

\item Unlike \cite{gan2015exact}, the approach does not require
  restrictive assumptions on flow directionality or voltage limits. It
  also do not require costs to be convex, allowing arbitrary (possibly
  discontinuous) piecewise-linear costs.

\item The resulting algorithm is inherently distributed and can be
  implemented using a message-passing framework, which is expected to
  be a significant advantage for future distribution networks.
\end{itemize}

\noindent
On the other hand, a disadvantage of our algorithm is that we only
produce approximate solutions and there may be cases where achieving
acceptable error tolerances will require intractably fine
discretizations. However, we show that, for several practical OPF
problems, this problem can be alleviated by leveraging CP techniques.


A closely related approach, developed in an abstract form and stated in the language of linear programming, was presented in \cite{bienstock2015lp}. The authors present a general framework that computes super-optimal and approximately feasible solutions to graph-structured mixed-integer polynomial optimization problems. While our approach provides similar guarantees, our main contributions relative to that work are as follows:
\begin{itemize}
	\item[1] We restate the approach in the intuitive language of graphical models. This allows us to take advantage of the inference techniques developed for graphical models \cite{wainwright2008graphical,sontag2010approximate}, and study problems beyond optimization (probabilistic inference, for example).
\item[2] As opposed to the point discretization approach employed in \cite{bienstock2015lp}, we develop an interval discretization approach, that allows us to use interval CP techniques (bound-tightening, constraint propagation etc.) to achieve practically efficient implementation of the algorithm.
\end{itemize}

\noindent
We note that a related approach has been used for finding approximate Nash-equilibria in tree-structured graphical games \cite{kearns2001graphical}. Finally, we note that related ideas have been explored in detail in the graphical models and constraint programming literature \cite{dechter2003constraint,wainwright2008graphical,johnson2008convex,sontag2010approximate}.

The rest of this paper in organized as follows. Section
\ref{sec:TechIntro} provides the background on power systems and
graphical models necessary for this paper. Section \ref{sec:OPFasGM}
formulates the OPF problem over a tree network as a MAP-inference
problem over a tree-structured factor graph. Section \ref{sec:DPInt}
describes a finite dynamic-programming algorithm based on an interval
discretization of the variables in the factor graph and presents
guarantees associated with the algorithm. Section \ref{sec:Num}
evaluates our algorithm and compares it to off-the-shelf solvers on a
number of IEEE distribution network test cases. Finally, Section
\ref{sec:Conc} summarizes our findings and presents directions for
future work.

\section{Background}\label{sec:TechIntro}

In this Section, we introduce all the necessary background on
Alternating Current (AC) power flows and graphical models required to
follow the development of the algorithm in this paper. We have attempted 
to make this Section self-contained providing sufficient details for
the purposes of this paper. Interested readers may consult 
textbooks on power engineering \cite{BergenVittal,pai2014computer} and
graphical models
\cite{koller2009probabilistic,dechter2003constraint,wainwright2008graphical} for further details.

In the following, $\Com$ denotes the set of complex numbers and $\R$
the set of real numbers.$\ic=\sqrt{-1}$ to avoid confusion
with currents as is traditional in power systems. For $x\in\Com$, we
use $\herm{x}$ to denote the complex conjugate of $x$, $\Rep{x}$ the real part,
$\Imp{x}$ the imaginary part, and $\angle x$ the phase of $x$. For $x,y\in\Com$, $a\leq b$ denotes the pair of
inequalities $\Rep{a}\leq \Rep{b},\Imp{a}\leq \Imp{b}$. 

\subsection{AC Power Flow over a Tree Network}


We will work with a power distribution network transporting
Alternating-Current power (AC power). Mimicking power engineering
terminology, nodes in the network are called buses and edges are
called transmission lines (or simply lines or branches). These
networks are typically tree-structured, and the root of the tree is
known as the \emph{substation bus} - physically, this is the point at
which the power distribution network is connected to the high voltage
power transmission network. We label the nodes $0,\ldots,n$, where $0$
is the substation bus. The network is represented as a directed graph,
with all edges pointing towards the substation bus (this
directionality is simply a convention and has no physical
meaning - the physical power flow over this edge can be in either direction). Each node $k$ (except the substation) has a unique outgoing
edge, connecting it to its \emph{parent node}, denoted $\Par{k}$, and
$k$ is said to be a \emph{child} of $\Par{k}$.  $\Chi{i}$ denotes the
set of children of bus $i$. 
\begin{figure}
\centering\includegraphics[width=.95\columnwidth]{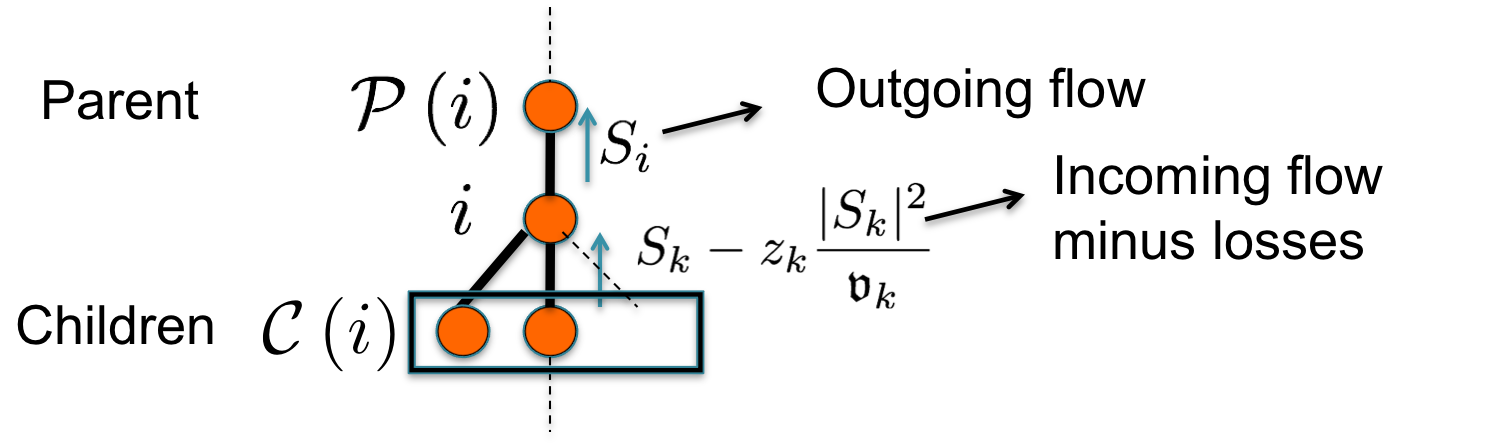}
\caption{AC Power flow in a tree network	}\label{fig:FlowPicB}
\end{figure}  
AC power flow in steady-state is described
by complex voltages (known as voltage phasors) that represent
sinusoidal steady-state voltage profiles. Let the complex voltage at
node $i$ be $V_i$ and let $\vs_i=|V_i|^2$ denote the squared voltage
magnitude. The power-flow is also a complex number, whose real part is
known as \emph{active power} and imaginary part as \emph{reactive
  power}.   In a tree network, every edge connects a bus and its
parent. Thus, we can identify every edge with the child node $i$
incident upon it, and denote its complex impedance by $z_i=r_i+\ic
x_i$ ($r_i$ is the resistance and $x_i$ the inductance of the
line). We denote the sending-end power flow from bus $i$ to bus
$\Par{i}$ by $\fp_i=\fpp_i+\ic \fpq_i$. Note that because of losses,
the power received at $\Par{i}$ is not equal to $\fp_i$. The power
losses are given by $z_{i}\frac{|\fp_i|^2}{\vs_i}$ (see
\cite{BergenVittal} or \cite{low2014convexb} for further details).
Combined with the conservation of flow at each node in the network,
this leads to the AC power flow equations, in the so-called
branch-flow model (first presented in \cite{baran1989optimal}),
illustrated in Figure \ref{fig:FlowPicB}. The power flow equations in
the branch flow (or Baran-Wu) form can be written as:

\begin{subequations}\begin{align}
\fpp_i &=\injp_i+\sum_{k \in \Chi{i}}\br{\fpp_k-r_k\br{\frac{\fpp_k^2+\fpq_k^2}{\vs_k}}}\quad \forall i \in \{0,\ldots,n\}\label{eq:ACpb} \\
\fpq_i &=\injq_i+\sum_{k \in \Chi{i}}\br{\fpq_k-x_k\br{\frac{\fpp_k^2+\fpq_k^2}{\vs_k}}}\quad \forall i \in \{0,\ldots,n\}\label{eq:ACpbreac}  \\
\vs_i &= \vs_k+\br{\zr_k^2+\zi_k^2}\frac{\br{\fpp_k^2+\fpq_k^2}}{\vs_k}-2\br{\zr_i\fpp_i+\zi_i\fpq_i} \quad \forall i \in \{0,\ldots,n\},k \in \Chi{i}\label{eq:ACPFohm}
\end{align}	\label{eq:PF}	
\end{subequations}

\noindent 
where $p_i$ and $q_i$ are the real and reaction power
injections/consumptions at bus $i$, as discussed in the next
Section. For the rest of this paper, this is the form of the PF
equations we will use.

\subsection{Optimal Power Flow (OPF) on a tree network}\label{sec:OPFIntro}

The optimal power flow (OPF) problem aims at finding the most
efficient utilization of generation and flexible demand resources in a
grid subject to the power flow constraints and engineering limits on
voltages, currents and flows. At each node, there is generation and/or
consumption of power. To simplify notation, we assume that there is
only one entity (generator or consumer) at each node in the network (but this restriction is not necessary). Both generators and
flexible consumers are characterized by their injection domain
$\br{\injp_i,\injq_i}\in\Inj_i$. The domain may be a finite set (for modeling
discrete load control where a load can take on one of a set of
possible values, for example) or an infinite set (for modeling
continuous constraints like minimum/maximum generation
limits). Inflexible generators or consumers are modeled by choosing
$\Inj_i$ to be a singleton set. Additionally, each generator has a
cost of production $c_i\br{\injp_i,\injq_i}$ and similarly every
flexible consumer may be compensated for adjusting consumption, and
this compensation is also denoted $c_i\br{\injp_i,\injq_i}$.

With these assumptions, a generic OPF problem can be stated as
\begin{subequations}
\begin{align}
\mini_{\injp,\injq\vs,\fpp,\fpq} &\sum_{i=0}^n c_i\br{\injp_i,\injq_i} \text{ (Minimize Cost)}	\\
\;\;\;\:\;\;\text{ Subject to } 
& \eqref{eq:ACpb},\eqref{eq:ACpbreac},\eqref{eq:ACPFohm} \\
& \vs^{L}_i \leq \vs_i \leq \vs^{U}_i \quad \forall i \in \{1,\ldots,n\} \label{eq:ACPFv} \\
& \lb{\fpp_i} \leq \fpp_i \leq \ub{\fpp_i},\lb{\fpq_i} \leq \fpq_i \leq \ub{\fpq_i} \, \forall i \in \{1,\ldots,n\} \label{eq:ACPFf} \\
& \br{\injp_i,\injq_i}\in \Inj_i\quad \forall i \in \{1,\ldots,n\} \label{eq:ACPFpq} \\
& \fpp_0=0,\fpq_0=0,\vs_0=\vs^{ref} \label{eq:Slackbus} 
\end{align}\label{eq:OPFform}	
\end{subequations}

\noindent
The costs and flow balance constraints are described above. The
constraints \eqref{eq:ACPFv} and \eqref{eq:ACPFf} come from
engineering limits - devices connected to the grid only work properly
for a certain range of voltages, and the flow limits are related to
the capacity of transmission lines (both due to dynamic stability and
thermal limitations). The constraint \eqref{eq:Slackbus} enforces that
there is no current or flow going upstream from the substation bus and
that the substation bus voltage is set to a fixed reference value,
$\vs^{ref}$. Note that more general OPF problems can be handled with
our approach (adding tap transformer positions, capacitor banks etc.),
but we restrict ourselves to this basic problem to simplify notation
and make the exposition clear. We make assumptions on the cost
function and problem data as stated below:
\begin{assumption}\label{assump:A}
The injection constraint set can be partitioned as: 
\begin{align*}
\Inj_i=\cup_{t=1}^{|\Inj_i|}\Int^{\inj}\br{t},\Int^{\inj}\br{t}=\left\{\br{\injp_i,\injq_i}: \lb{\injp_i}\br{t}\leq \injp_i \leq \ub{\injp_i}\br{t}, \lb{\injq_i}\br{t}\leq \injq_i \leq \ub{\injq_i}\br{t}\right\}	
\end{align*}
and the cost function $c_i\br{\injp_i,\injq_i}$ is a linear function over $\Int^{\inj}\br{t}$:
\[c_i\br{\injp_i,\injq_i}=a_i\br{t}\injp_i+b_i\br{t}\injq_i+c_i\br{t}\]
\end{assumption}
\begin{assumption}\label{assump:B}
$\exists M>0$ such that:
\begin{itemize}
	\item[1]  $[\lb{\fpp_i},\ub{\fpp_i}],[\lb{\fpq_i},\ub{\fpq_i}],[\lb{\vs_i},\ub{\vs_i}] \subseteq [-M,M] \quad i\in \{1,\ldots,n\}$ (voltages/flows are bounded uniformly across all nodes).
	\item[3] $\lb{\vs_i}\geq \frac{1}{M}$ for $i=1,\ldots,n$ (voltage lower bounds are bounded below).
	\item[3] $|z_i|\leq M,i=1,\ldots,n$ (impedances are bounded uniformly across all nodes).
	\item[4] $|\Inj_i|\leq M,i=0,\ldots,n$ (number of pieces in the cost is bounded).
	\item[5] $\forall i\in \{0,\ldots,n\},t\in\{1,\ldots,|\Inj_i|\}$: $\max\br{\ub{\injp_i}\br{t}-\lb{\injp_i}\br{t},\ub{\injq_i}\br{t}-\lb{\injq_i}\br{t}} \leq \frac{1}{M}$ (the size of each piece in the piecewise linear cost is small).
\end{itemize}
\end{assumption}

\noindent
Assumptions \ref{assump:A} and \ref{assump:B} are non-restrictive: Assumption \ref{assump:A} simply requires that
the cost function is piecewise-linear (or can be approximated by one) and assumption \ref{assump:B}  requires that all parameters are bounded.

\subsection{Factor Graphs}
\begin{figure}
\centering\includegraphics[width=.23\columnwidth]{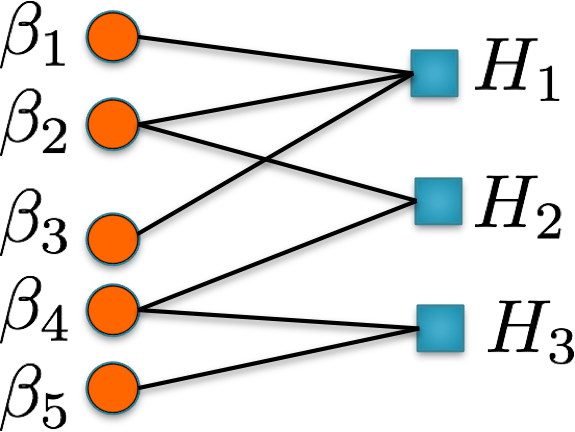}
	\caption{Factor graph corresponding to \eqref{eq:FactExample}}\label{fig:FactExample}	
\end{figure}

A factor graph \cite{kschischang2001factor} is a formal tool to express the structure of an optimization problem or a probability distribution. In this paper, we focus on optimization problems and do not discuss the use of factor graphs in probabilistic inference. A factor graph is defined by specifying:

\begin{itemize}
	\item[1] A set of $n$ variables $\{\xs_i\}_{i=1}^{n}$ and their domains $\xs_i \in \Xcal_i$.
	\item[2] A set of functions, called \emph{factors}, $\{H_{k}:\displaystyle\prod_{t \in \alpha\br{k}} \Xcal_t \mapsto \R\}_{k=1}^m$, where $\alpha\br{k}\subseteq \{1,\ldots,n\}$ is the set of variables that $H_k$ depends on.
\end{itemize}

\noindent
The factor graph is represented as a bipartite graph where the variables live on one side of the graph and the factors on the other side. A variable is connected to a factor if the factor depends on the variable, so that there is an edge between $\xs_i$ and $H_j$ if and only if $i \in \alpha\br{j}$. 
The optimization problem associated with the factor graph is $
\mini_{\{\xs_i \in \Xcal_i\}_{i=1}^n} \sum_{k=1}^m H_k\br{\xs_{\alpha\br{k}}}$. 
As a concrete example, the optimization problem represented by the factor graph shown in Figure \ref{fig:FactExample} is:
\begin{align}
\min_{\{\xs_i \in \{0,1\}\}_{i=1}^5} H_1\br{\xs_1,\xs_2,\xs_3}+H_2\br{\xs_2,\xs_4}+H_3\br{\xs_4,\xs_5}	\label{eq:FactExample}
\end{align}


\noindent
A naive approach to solving \eqref{eq:FactExample} would be to simply enumerate all $2^5$ assignments to the variables $\xs_1,\ldots,\xs_5$. However, one can solve this problem efficiently by exploiting the factor graph structure as follows: We first note that $\xs_5$ only appears in the factor $H_3$, so we can rewrite the optimization as 
\[\min_{\xs_1,\xs_2,\xs_3,\xs_4} H_1\br{\xs_1,\xs_2,\xs_3}+H_2\br{\xs_2,\xs_4}+\min_{\xs_5}H_3\br{\xs_4,\xs_5}\]
Define $\kappa_3\br{\xs_4}=\min_{\xs_5} H_3\br{\xs_4,\xs_5}$ (this function can be evaluated using $4$ units of time assuming each function evaluation takes $1$ unit of time, by evaluating $H_3$ for all $4$ assignments to its arguments). The problem then reduces to
\[\min_{\xs_1,\xs_2,\xs_3} H_1\br{\xs_1,\xs_2,\xs_3}+\min_{\xs_4} H_2\br{\xs_2,\xs_4}+\kappa_3\br{\xs_4}\]
Again, define $\kappa_2\br{\xs_2}=\min_{\xs_4} H_2\br{\xs_2,\xs_4}+\kappa_3\br{\xs_4}$ (this can again be evaluated in $4$ units of time). Then, the problem reduces to 
\[\min_{\xs_1,\xs_2,\xs_3} H_1\br{\xs_1,\xs_2,\xs_3}+\kappa_3\br{\xs_3}\]
No further simplification is possible since $H_1$ depends on all the remaining variables. The optimal value can be computed now in $8$ units of time (since there are $2^3$ possible assignments to the variables).

Thus the global optimum can be computed using $4+4+8=16$ units of time as opposed to the $32$ units of time taken by the naive brute-force approach. This algorithm generalizes to arbitrary tree-structured factor graphs (factor graphs with no cycles). In Section \ref{sec:OPFasGM}, we formulate the ACOPF \eqref{eq:OPFform} as a tree-structured factor graph and show how to exploit factor graph techniques to solve the ACOPF with approximation guarantees.

\section{OPF as a Graphical Model}\label{sec:OPFasGM}

This section shows how to rewrite the OPF problem \eqref{eq:OPFform} as
a graphical model inference problem. We first define the augmented
variables $\xs_i=\begin{pmatrix}
	\vs_i & \fpp_i & \fpq_i \end{pmatrix},i=1,\ldots,n,\xs_0 =\begin{pmatrix}
	\vs^{ref} & 0  & 0
\end{pmatrix}$. Note that $\xs_0$ is fixed  and is only introduced for notational convenience. In terms of the augmented variables $\xs_i$, the constraints \eqref{eq:ACPFv},\eqref{eq:ACPFf} are simply bound constraints on components of $\xs_i$.  The domain of the variable $\xs_i$ is defined as $\Xcal_i =[\lb{\vs}_i,\ub{\vs}_i] \times [\lb{\fpp}_i,\ub{\fpp}_i]  \times [\lb{\fpq}_i,\ub{\fpq}_i]$. Further, using \eqref{eq:ACpb}, we can define
\begin{subequations}
\begin{align}
\injp\br{\xs_i,\xs_{\Chi{i}}} & =\fpp_i-\sum_{k\in\Chi{i}} \br{\fpp_k-\zr_k	\frac{\fpp_k^2+\fpq_k^2}{\vs_k}} \\
\injq\br{\xs_i,\xs_{\Chi{i}}} & =\fpq_i-\sum_{k\in\Chi{i}} \br{\fpq_k-\zi_k	\frac{\fpp_k^2+\fpq_k^2}{\vs_k}} \\
\vs_i\br{\xs_k} & = \vs_k+\br{\zr_k^2+\zi_k^2}\frac{\br{\fpp_k^2+\fpq_k^2}}{\vs_k}-2\br{\fpp_k\zr_k+\fpq_k\zi_k}
\end{align}	
\end{subequations}

\noindent
For each $i=0,\ldots,n$, we define 
\begin{align}
&H_i\br{\xs_i,\xs_{\Chi{i}}}= \\
&\begin{cases}
c_i\br{\inj_i}  & \text{ if } 	\left\{\begin{array}{l} \br{\injp_i\br{\xs_i,\xs_{\Chi{i}}},\injq_i\br{\xs_i,\xs_{\Chi{i}}}}  \in \Inj_i	\\
 \vs_i  = \vs_i\br{\xs_k} \quad \forall k \in \Chi{i} \end{array}\right. \\
\infty & \text{ otherwise } 
\end{cases}	\label{eq:Hdef}
\end{align}
This is a convenient shorthand that allows to write the constrained optimization problem \eqref{eq:OPFform} as an unconstrained problem, thereby simplifying notation. The OPF problem \eqref{eq:OPFform} is equivalent to
\begin{align}
\mini_{\{\xs_i\in \Xcal_i\}_{i=1}^n} \sum_{i=0}^n H_i\br{\xs_i,\xs_{\Chi{i}}}\label{eq:optDP}	
\end{align}
This corresponds to a graphical model in the factor graph representation \cite{kschischang2001factor} where the nodal variables are $\xs_i$ and the factors correspond to the $H_i$. 
\begin{definition}[OPF factor graph]\label{def:FactG}
The problem \eqref{eq:optDP} corresponds to a factor graph: The set of variable nodes is 
\[\xs_i\in \Xcal_i \;\;\; (i=1,\ldots,n)\] and the set of factors is 
\[H_i \;\;\; (i=0,\ldots,n)\]
For $i=1,\ldots,n$, $\xs_i$ is connected to $H_i,H_{\Par{i}}$ (see Figure \ref{fig:FactGraph}).
\end{definition}

\begin{theorem}
The factor graph from definition \ref{def:FactG} is a tree. Hence, the problem \eqref{eq:optDP} can be solved exactly by a two-pass dynamic programming algorithm.
\end{theorem}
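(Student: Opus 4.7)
The plan is to prove the tree property by exhibiting a simple bijection between the proposed factor graph and a subdivision of the original power network tree, and then invoke the generic dynamic programming procedure from the previous section.

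First, I would count vertices and edges. The factor graph has $n$ variable nodes $\xs_1,\ldots,\xs_n$ and $n{+}1$ factor nodes $H_0,\ldots,H_n$, so $2n{+}1$ vertices in total. Each variable $\xs_i$ contributes exactly two edges (to $H_i$ and to $H_{\Par{i}}$), and no edges are shared, so the total edge count is $2n = (2n{+}1) - 1$. Hence it suffices to show the graph is connected, since a connected graph with $|V|-1$ edges is a tree.

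For connectedness, I would exploit the tree structure of the underlying network. Think of each factor $H_i$ as corresponding to bus $i$ of the distribution network, and each variable $\xs_i$ as corresponding to the line between $i$ and $\Par{i}$; the edges $\xs_i$–$H_i$ and $\xs_i$–$H_{\Par{i}}$ then implement the two endpoints of that line. Thus the factor graph is literally the subdivision of the network tree obtained by inserting a degree-$2$ vertex in the middle of every edge, and subdividing edges preserves the tree property. Concretely, given any variable node $\xs_i$, the sequence $\xs_i,\,H_{\Par{i}},\,\xs_{\Par{i}},\,H_{\Par{\Par{i}}},\ldots$ is well defined (because the parent map on $\{0,\ldots,n\}$ has no cycles and terminates at the substation bus $0$), and gives an explicit path to the root factor $H_0$. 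Any two factor nodes are similarly connected through $H_0$, and any variable or factor can be reached from $H_0$.

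Having established the tree property, the two-pass dynamic programming algorithm proceeds exactly as in the worked example of Section~\ref{sec:TechIntro}. Rooting the tree at $H_0$, one performs an upward pass: for each leaf factor $H_j$ (i.e., $j$ with $\Chi{j}=\emptyset$), eliminate it together with the variable $\xs_j$ linking it to its parent factor $H_{\Par{j}}$ by defining a message $\kappa_j(\xs_{\Par{j}}) := \min_{\xs_j \in \Xcal_j} H_j(\xs_j)$ incorporated into $H_{\Par{j}}$; iterating this bottom-up eliminates each $(\xs_j,H_j)$ pair exactly once, which is consistent because each $H_i$ in \eqref{eq:optDP} depends only on $\xs_i$ and $\xs_{\Chi{i}}$, matching the adjacencies of the tree. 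The downward pass then recovers the argmin assignments in the reverse order. The only genuine subtlety is not structural but computational: the minimizations in the upward pass are over continuous domains $\Xcal_i \subseteq \R^3$, so this abstract DP is not yet a finite algorithm — that gap is precisely what the interval discretization of Section~\ref{sec:DPInt} is designed to close, and it is not needed for the present structural claim.
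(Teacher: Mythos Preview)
Your proof of the tree property is correct and follows the same edge/vertex counting plus connectedness argument as the paper; your subdivision observation (identifying $H_i$ with bus $i$ and $\xs_i$ with the edge $(i,\Par{i})$) is in fact a cleaner way to establish connectedness than the paper's own sketch.

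One slip worth fixing in your DP description: at a leaf $j$ you propose to ``eliminate $H_j$ together with $\xs_j$'' via $\kappa_j(\xs_{\Par{j}}) := \min_{\xs_j\in\Xcal_j} H_j(\xs_j)$. This is not right, because $\xs_j$ also appears in the parent factor $H_{\Par{j}}$; minimizing it out using only $H_j$ loses that coupling (and the resulting object is a constant, not a function of $\xs_{\Par{j}}$). The correct leaf message is $\kappa_j(\xs_j)=H_j(\xs_j)$, and $\xs_j$ is only eliminated when you process $H_{\Par{j}}$ via $\min_{\xs_j}\bigl(H_{\Par{j}}(\xs_{\Par{j}},\ldots,\xs_j,\ldots)+\kappa_j(\xs_j)\bigr)$, exactly as in Algorithm~\ref{Alg:DPMain}. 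This does not affect the structural claim, since both you and the paper ultimately defer the DP correctness to the standard tree--factor-graph result.
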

\begin{proof}
Every variable node $\xs_i \;\; (i=1,\ldots,n)$ is connected to two factors $H_{\Par{i}},H_i$.  Thus, the total number of edges in the factor graph is $2n$ and the total number of nodes is $2n-1$ (there are factors $H_0,\ldots,H_n$ and variable nodes $\xs_1,\ldots,\xs_n$). Further, the factor graph is connected because variable nodes are connected (Since the variable nodes that are also neighbors in the power network are neighbors 2 hops apart in the factor graph). The factor graph is a connected graph with the number of vertices equal to the number of edges plus one. Therefore, the graph is a tree. The result on efficient inference over a tree factor graph is a standard result \cite{kschischang2001factor}.
\end{proof}
\begin{figure}
\centering\includegraphics[width=.55\textwidth]{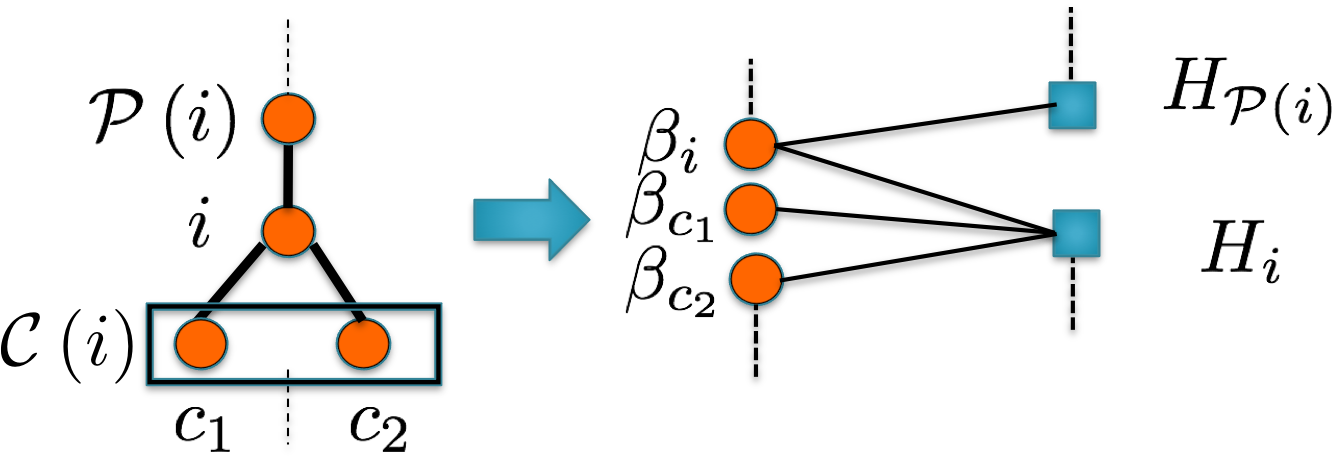}	
\caption{Transformation from the power network to factor graph: Each variable node $i$ is connected to two factors - $H_{\Par{i}}$ and $H_i$}	\label{fig:FactGraph} 
\end{figure}

\noindent
The dynamic-programming (DP) approach is formalized in Algorithm
\ref{Alg:DPMain}. The algorithm works by passing ``messages''
$\kappa_t\br{\xs_t}$. Let $\Desc{t}$ denote all the nodes in the
subtree rooted at $t$ (the descendants of $t$, including $t$ itself).

\begin{algorithm}
\begin{algorithmic}
\State $\mathrm{Processed}=\emptyset$
\State $\kappa_i\br{\xs_i}=0,\eta_i\br{\xs_i}=\emptyset,i=0,1\ldots,n$
\While{$|\mathrm{Processed}|<=n$} 
\State Choose $i \in \{k: k \not\in \mathrm{Processed},\Chi{i} \subseteq \mathrm{Processed}\}$
\State  
\begin{subequations}
\begin{align}
	\kappa_{i}\br{\xs_i} \gets \displaystyle\min_{\xs_{\Chi{i}}}H_i\br{\xs_i,\xs_{\Chi{i}}} +\sum_{k \in \Chi{i}}\kappa_{k}\br{\xs_k}  	\\
	\eta_i\br{\xs_i} \gets \displaystyle\argmin_{\xs_{\Chi{i}}}H_i\br{\xs_i,\xs_{\Chi{i}}} +\sum_{k \in \Chi{i}}\kappa_{k}\br{\xs_k}
  \end{align}	\label{eq:DPMain}
\end{subequations}
\State $\mathrm{Processed}=\mathrm{Processed}\cup\{i\}$
\EndWhile
\State $\br{\opt{c},\opt{\xs_{\Chi{0}}}} \gets\displaystyle\min_{\xs_{\Chi{0}}}H_0\br{\xs_0,\xs_{\Chi{0}}} +\sum_{k \in \Chi{0}}\kappa_{k}\br{\xs_k} 	$
\State $\mathrm{Processed}=\{0\}$. 
\While{$|\mathrm{Processed}|\leq n$} 
\State Choose $i \in \{k: k \not\in \mathrm{Processed},\Par{k}\in \mathrm{Processed}\} $
\State $\opt{\xs}_{\Chi{i}}\gets\eta_{i}\br{\opt{\xs}_i}$
\State $\mathrm{Processed}\gets \mathrm{Processed} \cup \Chi{i}$
\EndWhile  \\
\Return $\br{\opt{c},\opt{\xs}}$
\end{algorithmic}
\caption{DP Algorithm: (optimization implicitly subject to $\xs_i \in \Xcal_i$)}	\label{Alg:DPMain}
\end{algorithm}

The message denotes the optimal value of the following subproblem of\eqref{eq:optDP}:
\begin{subequations}
\begin{align*}
\kappa_t\br{\xs_t}=& \min_{\xs_i\in \Xcal_i :i \in \Desc{t}\setminus \{t\}} \sum_{i\in \Desc{t}} H_i\br{\xs_i,\xs_{\Chi{i}}}. \end{align*}
\end{subequations}
The messages can be computed recursively starting at the leaves of the tree. 
For the 5 bus network shown in Figure \ref{fig:5buspic}, Algorithm \ref{Alg:DPMain} proceeds as follows: At $T=0$, nodes $3,4$ send messages $\kappa_3,\kappa_4$ (which are simply equal to $H_3,H_4$) to their parent nodes. For brevity, we do not write the constraint $\xs_i \in \Xcal_i$ in the updates below explicitly. At $T=1$, node $2$ computes 
\[\kappa_2\br{\xs_2}=\min_{\xs_3\in \Xcal_3} H_2\br{\xs_2,\xs_3}+\kappa_3\br{\xs_3},\eta_2\br{\xs_2}=\argmin_{\xs_3\in \Xcal_3} H_2\br{\xs_2,\xs_3}+\kappa_3\br{\xs_3}.\]At $T=2$, node $1$ computes 
\begin{align*}
\kappa_1\br{\xs_1}=\min_{\xs_2\in \Xcal_2,\xs_4 \in \Xcal_4} H_2\br{\xs_1,\xs_2,\xs_4}+\kappa_2\br{\xs_2}+\kappa_4\br{\xs_4}	\\
\eta_1\br{\xs_1} = \argmin_{\xs_2\in \Xcal_2,\xs_4 \in \Xcal_4} H_2\br{\xs_1,\xs_2,\xs_4}+\kappa_2\br{\xs_2}+\kappa_4\br{\xs_4}.
\end{align*}
The forward pass ends here, and the backward pass proceeds similarly in the reverse direction:
\begin{align*}
T=3: & \br{\text{OPT},\opt{\xs_1}}=\min_{\xs_1 \in \Xcal_1} \kappa_1\br{\xs_1} \\
T=4: & \br{\opt{\xs_2},\opt{\xs_4}}=\eta_1\br{\opt{\xs_1}}\\
T=5: & \opt{\xs_3}=\eta_2\br{\opt{\xs_2}}.
\end{align*}

\begin{figure}
       \centering \includegraphics[width=.32\textwidth]{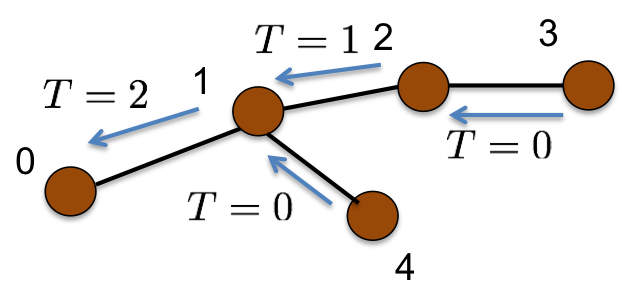}
       \caption{Transformation of ACOPF to Graphical Model} \label{fig:5buspic}
\end{figure}

\section{Finite Algorithm based on Interval Discretization}
\label{sec:DPInt}
The algorithm described in the preceding section, as it stands, cannot
be implemented on a computer, since it requires representing the
functional objects $\kappa^i\br{\xs_i}$.  A straightforward approach to deal
with this would be to discretize the variables $\xs_i$ to a finite set
and allow for some error tolerance on each of the constraints in
\eqref{eq:OPFform}. However, our experiments have indicated that, in
order to produce solutions of acceptable accuracy, one needs an
intractably fine discretization of $\xs_i$ resulting in prohibitive
computation times, and that an accurate estimation of the error
tolerance parameter can be problematic.

Hence, we need an alternative procedure to approximate the infinite dimensional dynamic program. We take the approach of using an interval-based discretization of the power flow variables $\vs_i,\fpp_i,\fpq_i$ (i.e., each variable can take values in the respective interval). Given the constraints $\xs_i \in \Xcal_i$, we partition the set $\Xcal_i$ into a finite union of interval regions defined by interval constraints on the components of $\xs_i$. For any practical OPF problem, $\Xcal_i$ is a compact set (since the bounds on voltages/flows are always finite), so such a decomposition is always possible. Naturally, the computational complexity of the algorithm depends on the number of interval regions. If we fix an interval resolution $\epsilon$ (each interval region in our decomposition is made up of interval constraints of width at most $\epsilon$ in each variable), the number of interval regions depends on the bounds defining the region $\Xcal_i$. Thus, it is of interest to have as tight bounds as possible on the variables $\xs_i$. We describe a procedure to infer tight bounds on the power flow variables using convex relaxations in Section \ref{sec:Relax}. We use the inferred bounds to redefine $\Xcal_i$ and perform the interval discretization on this refined domain.
Then, we perform the dynamic programming update in the space of intervals: For every interval-value of the parent, we look at all possible interval values of the children, and for each combination of intervals, compute the minimum cost solution given the interval constraints. This gives us a lower bound on the message for each interval value of the parent, thus leading to a piecewise-constant lower bound on the message functions $\kappa^i\br{\xs_i}$.  The algorithm is described in Section \ref{sec:DPInt1}.

As the interval discretization gets finer, the relaxation becomes tighter, reducing the errors incurred in the power flow equations due to the relaxation. The errors can be made smaller than $\epsilon$ for any $\epsilon>0$, with the running time polynomial in $\frac{1}{\epsilon}$. These results are formalized in Theorem \ref{thm:ApproxMain} in Section \ref{sec:DPThm}.

\subsection{Interval Dynamic Programming with Adaptive Refinement}\label{sec:DPInt1}

We use Algorithm \ref{Alg:DPMain} with the DP update step replaced with a tractable lower bound based on an interval partition of the variable domains. We develop such a scheme guaranteed to produce a lower bound on the optimal value and an approximately feasible solution. The algorithm is based on a set of operators that will replace the DP update \eqref{eq:DPMain}:
\begin{itemize}
	\item[1]\emph{Interval Partition Operator}: A procedure that take a set and creates a collection of intervals such that every point in the set is in an interval box of size at most $\epsilon$, for some specified tolerance $\epsilon$.
	\item[2] \emph{Interval DP Relaxation}: A procedure that takes a set of interval constraints on children of a node $i$ and produces a lower bound on  $\kappa^i\br{\xs_i}$. 
\end{itemize}
We now define these formally. 

\begin{definition}
An interval constraint on a real variable $x\in\R$ is a constraint of the type $a \leq x \leq b$, parameterized by real numbers $a,b, a\leq b$. 
\end{definition}

\begin{definition}
An interval region $\Int_i$ is a subset of $\Xcal_i$ specified by interval constraints on the variables $\vs_i,\fpp_i,\fpq_i$:
\begin{align*}
\Int_i=\left\{\br{\vs_i,\fpp_i,\fpq_i}:	\begin{array}{ccc}
\lb{\vs_i} \leq \vs_i \leq \ub{\vs_i} \\
\lb{\fpp_i} \leq \fpp_i \leq \ub{\fpp_i} \\
\lb{\fpq_i} \leq \fpq_i \leq \ub{\fpq_i} 
\end{array}\right\}	
\end{align*}
We use  $\vs\br{\Int_i}=[\lb{\vs_i},\ub{\vs_i}],\fpp\br{\Int_i}=[\lb{\fpp_i},\ub{\fpp_i}],\fpq\br{\Int_i}=[\lb{\fpq_i},\ub{\fpq_i}]$ to denote the interval constraints in $\Int_i$ corresponding to each of the variables,
\begin{align*}
& \mathrm{mid}\br{\Int_i}	  = \br{\frac{\lb{\vs_i}+\ub{\vs_i}}{2},\frac{\lb{\fpp_i}+\ub{\fpp_i}}{2},\frac{\lb{\fpq_i}+\ub{\fpq_i}}{2}}
\end{align*}
to select the midpoint of an interval region and
\[\mathrm{Rad}\br{\Int_i}=\max\br{|\lb{\vs_i}-\ub{\vs_i}|,|\lb{\fpp_i}-\ub{\fpp_i}|,|\lb{\fpq_i}-\ub{\fpq_i}|}.\]
\end{definition}

\begin{definition}
An interval partition $\Ical_i$ of a set $X \subseteq \Xcal_i$ is a collection of interval regions satisfying the following conditions:
\begin{subequations}
\begin{align}
& \Ical_i\br{t}  = \left\{\xs_i: \begin{array}{ccc} 
\lb{\vs_i}\br{t}  \leq & \vs_i  & \leq \ub{\vs_i}\br{t} \\
\lb{\fpp_i}\br{t}  \leq & \fpp_i  & \leq \ub{\fpp_i}\br{t} \\
\lb{\fpq_i}\br{t}  \leq & \fpq_i  & \leq \ub{\fpq_i}\br{t} 
 \end{array}\right\},t=1,\ldots,|\Ical_i| \label{eq:IntervalDefa}\\
& \cup_{t=1}^{|\Ical_i|} \Ical_i\br{t}  = X	\label{eq:IntervalDefb}\end{align}\label{eq:IntervalDef}	
\end{subequations}

\noindent
The number of regions in the partition is denoted by $|\Ical_i|$. A partition such that $
\max_t \mathrm{Rad}\br{\Ical_i\br{t}} \leq \epsilon $ 
is denoted as $\mathrm{Partition}\br{\Xcal_i;\epsilon}$.
\end{definition}

\begin{definition}\label{def:IntervalRelax}
An interval relaxation of the DP update step \eqref{eq:DPMain} is a computational procedure that, given a bus $i \in \{0,\ldots,n\}$ with $\Chi{i}=\{k_1,\ldots,k_m\}$ and interval regions $\xs_i \in \Int_i \subseteq \Xcal_i,\xs_k \in \Int_k\subseteq \Xcal_k$ for each $k\in\Chi{i}$, with $\max_{k\in \{i\}\cup\Chi{i}}\mathrm{Rad}\br{\Int_k}\leq \epsilon$, produces as output an interval region $\Int_i^{\prime}$ and values $\injp_i,\injq_i$ such that 
\begin{subequations}
\begin{align}
& \exists \{\xs_k\in \Int_k\}_{k \in \Chi{i}},\xs_i\in \Int_i^{\prime} \nonumber \\
& \text{ s.t } \left\{\begin{array}{ll} |\injp_i-\injp_i\br{\xs_i,\xs_{\Chi{i}}}| &  \leq \eta \epsilon \\
|\injq_i-\injq_i\br{\xs_i,\xs_{\Chi{i}}}| &  \leq \eta \epsilon \\
 \left| \vs_i -\br{\vs_k+\frac{\fpp_k^2+\fpq_k^2}{\vs_k}\br{\zr_k^2+\zi_k^2}-2\br{\fpp_k\zr_k+\fpq_k\zi_k}}\right| & \leq \eta \epsilon \quad \forall k \in \Chi{i} 
 	\end{array}\right. \label{eq:IntervalRelaxDefa}\\
 & c_i\br{\injp_i,\injq_i} \leq \min_{\{\xs_k\in \Int_k\}_{k \in \Chi{i}\cup\{i\}}} H_i\br{\xs_i,\xs_{\Chi{i}}} \label{eq:IntervalRelaxDefb}\\
 & \{\xs_i:\exists \{\xs_k \in \Int_k\}_{k \in \Chi{i}} \text{ s.t }H_i\br{\xs_i,\xs_{\Chi{i}}}<\infty\} \subseteq \Int_i^\prime \subseteq \Int_i \label{eq:IntervalRelaxDefc}
\end{align}\label{eq:IntervalRelaxDef}
\end{subequations}
\noindent
where $\eta$ is a constant that depends only on the number $M$ from Assumption \ref{assump:B}. We denote this computation as 
\[\br{\Int_i^\prime,\injp_i,\injq_i}=\mathrm{PropBound}\br{i,\Int_i,\Int_{k_1},\ldots,\Int_{k_m}}.\]
\end{definition}
These conditions can be interpreted as follows:
\begin{itemize}
	\item  \eqref{eq:IntervalRelaxDefa} states that the relaxation gets tighter as the intervals get smaller. A natural relaxation that satisfies this requirement is to take convex envelopes of the nonlinear terms over the bound constraints defined by the intervals: This is what we use in the concrete implementation described in Section \ref{sec:ConcreteIntervalRelax}. 
	\item \eqref{eq:IntervalRelaxDefb} states that the injections produced by the relaxation step are super-optimal, so that we are guaranteed to get a lower bound on the optimal solution through the DP procedure.
	\item \eqref{eq:IntervalRelaxDefc} states that the bound propagation (which shrinks the interval $\Int_i$ to $\Int_i^{\prime}$ using the constraints implicit in $H_i$) cannot cut off any feasible points.
\end{itemize}

\noindent
Given this computational procedure, we can construct a DP-like
algorithm where the intractable DP update \eqref{eq:DPMain} is
replaced with a tractable procedure based on $\mathrm{PropBound}$,
thereby producing a lower bound on the message function
$\kappa^i\br{\xs_i}$. The algorithm is described in Algorithms
\ref{Alg:DPUpdate} and \ref{Alg:DPUpdateLeaf}. The algorithm starts by
partitioning the space $\Xcal_i$ using the interval partition
operator. For each element of the interval partition, we loop over the
pieces in the messages corresponding to each child, and propagate
constraints from the children to the parent $\xs_i$. If the propagated
interval is non-empty (that is, there exists a feasible setting for
the parents and children within the interval constraints), the lower
bound computed on $c_i\br{\injp_i,\injq_i}$ is used and a new piece is
added to the messages $\kappa^i\br{\xs_i},\eta^i\br{\xs_i}$. In
comparison to the DP Algorithm \ref{Alg:DPMain}, we also maintain
functions $\injp^i,\injq^i,\xs^i$ which store the optimal injections
and intervals for every variable computed in the DP procedure.
%


\subsection{Implementation of operators}

\subsubsection{Interval Discretization Operator}
The $\epsilon$-partition operator can be implemented by using a uniform discretization. The bounds $\lb{\vs},\ub{\vs},\lb{\fp},\ub{\fp}$ are obtained from the bound tightening procedure \eqref{eq:BoundTighten} described in Section \ref{sec:Relax}. For the variable $\vs_i$ with bounds $\lb{\vs_i},\ub{\vs_i}$, the partition operator will create the intervals
\[\left\{[\lb{\vs_i},\lb{\vs_i}+\epsilon^{\vs}],[\lb{\vs_i}+\epsilon^{\vs},\lb{\vs_i}+2\epsilon^{\vs}],\ldots,\left[\lb{\vs_i}+\left\lceil\frac{\ub{\vs_i}-\lb{\vs_i}}{\epsilon^{\vs}}\right\rceil \epsilon^{\vs}\right]\right\}\]
A similar interval discretization procedure is used for $\fpp_i,\fpq_i$. 

\subsubsection{Interval Relaxation Operator}\label{sec:ConcreteIntervalRelax}

In order to define the interval relaxation operator, it will be convenient to introduce the square of the current magnitude $\cur_i=\frac{\fpp_i^2+\fpq_i^2}{\vs_i}$. This serves to isolate the nonconvexities in the problem and simplify the derivation of convex relaxations. Note also that using the current is natural in view of its edge-invariance: in contrast to power flows, current conserves along any edge (power line).

The interval relaxation operator requires solution of the following problem:
\begin{subequations}
\begin{align}
 \Ext_{\{\vs_k,\fpp_k,\fpq_k\}_{k \in \Chi{i}\cup\{i\}}} & \{\vs_i,\fpp_i,\fpq_i,c_i\br{\injp_i,\injq_i}\} \\
 \text{Subject to } & \fpp_i = \injp_i+\sum_{k \in \Chi{i}}\br{\fpp_k-\cur_k\zr_k} \label{eq:PFnewa}\\
& \fpq_i = \injq_i+\sum_{k \in \Chi{i}}\br{\fpq_k-\cur_k\zi_k} \label{eq:PFnewb}\\
& \vs_i = \vs_k+\cur_k\br{\zr_k^2+\zi_k^2}-2\br{\fpp_k\zr_k+\fpq_k\zi_k},k \in \Chi{i} \label{eq:PFnewc}\\
& \vs_k\cur_k= \fpp_k^2+\fpq_k^2, k \in \Chi{i}\cup\{i\} \label{eq:PropNC}\\
& \vs_k \in \vs\br{\Int_k}, \fpp_k\in \fpp\br{\Int_k},\fpq_k\in \fpq\br{\Int_k},k\in\{i\}\cup\Chi{i}  \label{eq:PFnewd}\\
& \br{\injp_i,\injq_i} \in \Inj_i
\end{align}	\label{eq:PropBound}	
\end{subequations}

\noindent
where $\Ext$ means that we both maximize and minimize every term in the set of objectives subject to the constraints specified. Thus, we obtain tighter bounds on the variables $\vs_i,\fpp_i,\fpq_i$ and a lower bound on the objective given the interval constraints. The nonconvexity in the above problem is due to the constraint \eqref{eq:PropNC} and the possibly nonconvex cost $c_i$ and constraints $\Inj_i$. To deal with the later, we explicitly enumerate over $\Int_i^{\inj}\br{t},t=1,\ldots,|\Inj_i|$ so that for each $t$, the injection constraints and costs are linear. To deal with nonconvexity of \eqref{eq:PropNC}, we use convex envelopes of the bilinear and quadratic terms. An abstract version of the problem solved is below (we defer exact details to Section \ref{sec:AppRelax})

\begin{subequations}
$\forall t \in \{1,\ldots,|\Inj_i|\}$
\begin{align}
 \Ext_{\injp_i,\injq_i,\{\vs_k,\fpp_k,\fpq_k\}_{k \in \Chi{i}\cup\{i\}}} & \{\vs_i,\fpp_i,\fpq_i,a_i\br{t}\injp_i+b_i\br{t}\injq_i+c_i\br{t}\} \\
 \text{Subject to } & \eqref{eq:PFnewa},\eqref{eq:PFnewb},\eqref{eq:PFnewc} \\
& 0 \in \text{Relax}\br{\vs_k\cur_k-\br{\fpp_k^2+\fpq_k^2},\Int_k}, k \in \Chi{i}\cup\{i\} \label{eq:PropNC}\\
& \vs_k \in \vs\br{\Int_k}, \fpp_k\in \fpp\br{\Int_k},\fpq_k\in \fpq\br{\Int_k},k\in\{i\}\cup\Chi{i}  \\
	& \injp_i\in [\lb{\injp_i}\br{t},\ub{\injp_i}\br{t}],\injq_i\in [\lb{\injq_i}\br{t},\ub{\injq_i}\br{t}]
\end{align}	\label{eq:PropBoundRelax}	
\end{subequations}

\noindent
The relaxations we use depend on the bound constraints $\Int_k$ and are denoted as $\text{Relax}\br{\vs_k\cur_k-\br{\fpp_k^2+\fpq_k^2},\Int_k}$. A simple example of this kind of relaxation is shown pictorially in Figure \ref{fig:ConvexRelaxInt} for the constraint $xy=1$. It is easy to see that this relaxation gets tighter as the bound constraints on $x$ get tighter, leading to the property that any feasible solution $\br{x,y}$ of the relaxation satisfies $|xy-1| \propto \epsilon$, where $\epsilon$ is the size of the interval constraint on $x$ (this is formalized in Lemma \ref{lem:AppRelax} in the Appendix Section \ref{sec:AppRelax}).
\begin{figure}
\begin{center}
\includegraphics[width=.4\textwidth]{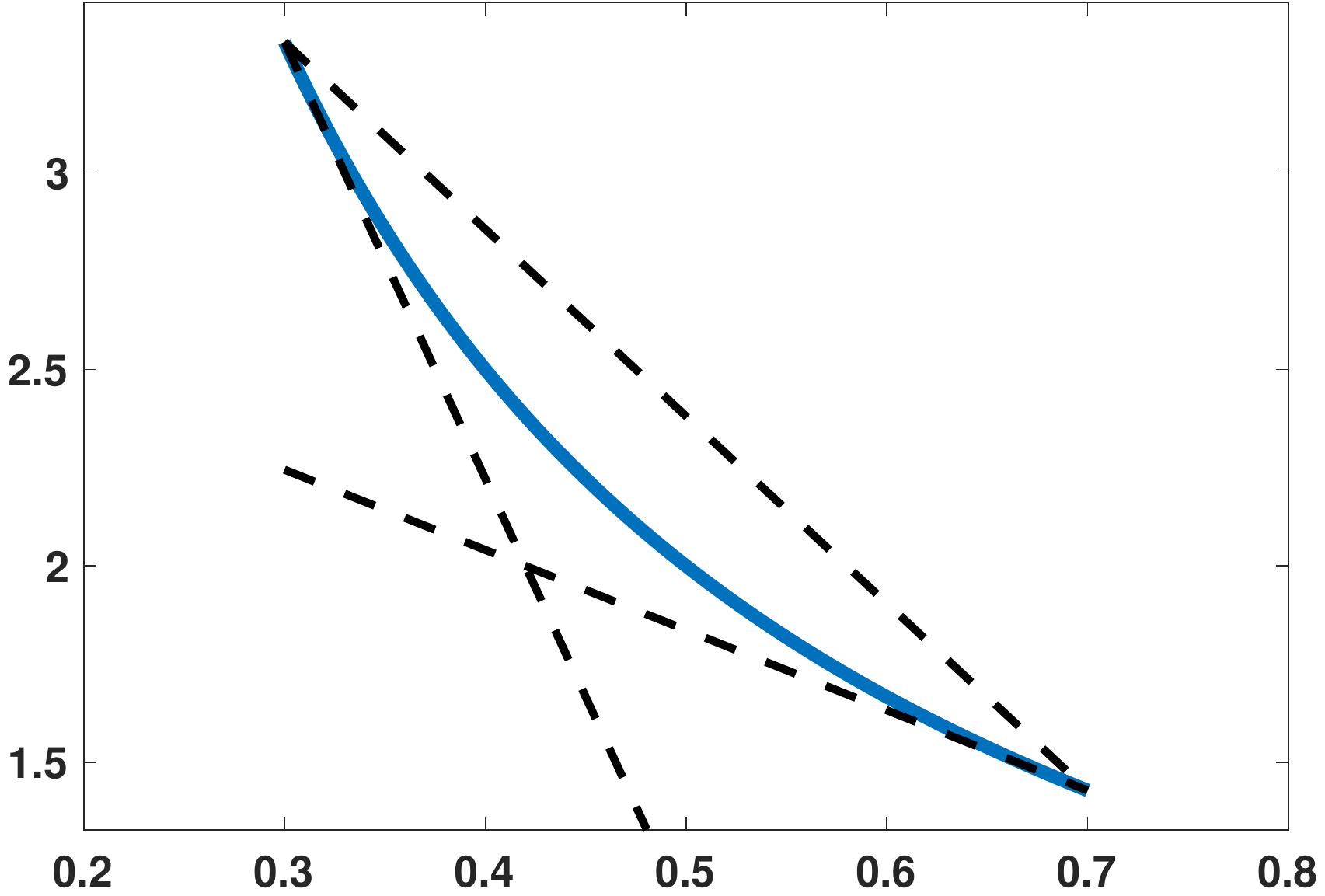}	
\end{center}	
\caption{Convex relaxation of nonlinear constraint $xy=1$ over the region $x\in [0.3,0.7],y\in [\frac{1}{0.7},\frac{1}{0.3}]$:  The set of points satisfying the constraint $xy=1,x \in [.3,.7]$ is plotted in the solid curve (blue). The convex region enclosed by the dashed lines (black) is the feasible region of the convex relaxation. }\label{fig:ConvexRelaxInt}
\end{figure}

\begin{algorithm}
\begin{algorithmic}
\State $\br{j,k} \gets \Chi{i}$
\State $n_i \gets 0$
\State $\Ical_i \gets \mathrm{Partition}\br{\Xcal_i,\epsilon}$
\For{$m_i=1,\ldots,|\Ical_i|$}
\For{$m_j \in 1,\ldots,|\eta^j|$}
\For{$m_k \in 1,\ldots,|\eta^k|$}
\State $\br{\Int,\injp,\injq} \gets \mathrm{PropBound}\br{\Ical_i\br{m_i},\xs^{j}\br{m_j},\xs^k\br{m_k}}$
\If {$\Int \neq \emptyset$}
	\State $n_i \gets n_i+1,\eta^i\br{n_i}\gets \br{m_j,m_k}$
	\State $\xs^i\br{n_i} \gets \Int,\kappa^i\br{n_i} \gets c_i\br{p,q}+\kappa^j\br{m_j}+\kappa^k\br{m_k}$
	\State $p^i\br{n_i} \gets \injp,q^i\br{n_i} \gets \injq$
\EndIf
\EndFor
\EndFor
\EndFor
\State \Return $\eta^i,\kappa^i,\injp^i,\injq^i,\xs^i$
\end{algorithmic}
\caption{Interval DP update at node $i$ with children $\br{j,k}$}	\label{Alg:DPUpdate}
\end{algorithm}

\begin{algorithm}
\begin{algorithmic}
\State $n_i \gets 0$
\For{$m_i \in \left\{1,\ldots,\left\lceil \frac{\ub{\vs_i}-\lb{\vs_i}}{\epsilon}\right\rceil\right\}$}
\For{$t=1,\ldots,|\Inj_i|$}
\State $n_i=n_i+1$
\If {$[\lb{\fpp},\ub{\fpp}]\cap [\lb{\injp_i\br{t}},\ub{\injp_i\br{t}}]\neq \emptyset,[\lb{\fpq},\ub{\fpq}]\cap [\lb{\injq_i\br{t}},\ub{\injq_i\br{t}}]\neq \emptyset$}
\State $\Int^{\vs}  \gets [\lb{\vs_i}+\br{m_i-1}\epsilon,\min\br{\lb{\vs_i}+m_i\epsilon,\ub{\vs_i}}],\Int^{\fpp} \gets  [\lb{\fpp},\ub{\fpp}]\cap [\lb{\injp_i\br{t}},\ub{\injp_i\br{t}}]$
\State $\Int^{\fpq} \gets [\lb{\fpq},\ub{\fpq}]\cap [\lb{\injq_i\br{t}},\ub{\injq_i\br{t}}],\eta^{i}\br{n_i} \gets \Int^{\vs}\times \Int^{\fpp} \times \Int^{\fpq}$
\State $\br{\injp^i\br{n_i},\injq^i\br{n_i}}\gets \displaystyle\argmin_{\injp_i\in \Int^{\fpp},\injq\in \Int^{\fpq}} c_i\br{t}+a_i\br{t}\injp_i+b_i\br{t}\injq_i$
\State $\kappa^i\br{n_i} \gets \displaystyle\min_{\injp_i\in \Int^{\fpp},\injq_i\in \Int^{\fpq}} c_i\br{t}+a_i\br{t}\injp_i+b_i\br{t}\injq_i$
\EndIf
\EndFor
\EndFor
\State \Return $\eta^i,\kappa^i$
\end{algorithmic}
\caption{Interval DP update at leaf node}	\label{Alg:DPUpdateLeaf}
\end{algorithm}

\subsection{Analysis of the interval DP algorithm} \label{sec:DPThm}

We now present formal results verifying the correctness, optimality, and feasibility properties of the solutions produced by our DP algorithm. 

\noindent
Before we state our main theorem that provides an approximation guarantee, we note that we can always convert an OPF problem on an arbitrary tree network to a problem on a tree network with maximum degree $3$:
\begin{lemma}\label{lem:DegLem}
An OPF problem on an arbitrary tree network with $n$ nodes and maximum degree $d$ can be converted to an OPF problem on a modified tree network with maximum degree $3$ and at most $nd$ nodes.
\end{lemma}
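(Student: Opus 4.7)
The plan is to use a standard node-splitting construction. For each node $i$ in the original tree whose degree exceeds $3$, I will replace $i$ by a chain of auxiliary ``virtual'' copies $i^{(1)}, i^{(2)}, \ldots, i^{(m)}$ connected by edges of zero impedance ($z = 0$) and impose a trivial injection set $\Inj_{i^{(j)}} = \{(0,0)\}$ with zero cost at every auxiliary copy except one (say $i^{(1)}$), which inherits the original injection set and cost $c_i(\cdot,\cdot)$. Concretely, if $i$ has children $k_1,\ldots,k_{m}$ in the tree (rooted toward the substation as in the paper), I would build a path $i^{(1)} \to i^{(2)} \to \cdots \to i^{(m-1)}$, hang $k_j$ off $i^{(j)}$ for $j = 1,\ldots,m-2$, and hang both $k_{m-1}$ and $k_m$ off $i^{(m-1)}$. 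The parent edge of $i$ is then attached to $i^{(1)}$. Every auxiliary node has at most one parent and at most two children, and so do the original nodes after this surgery, so the new tree has maximum degree $3$.

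Next I would verify the node count. A node of degree $d_i$ is replaced by $d_i - 2$ copies (when $d_i > 3$), and since $\sum_i d_i = 2(n-1)$ in any tree, the total number of nodes in the modified network is bounded by $n + \sum_i (d_i - 2) \le n + 2(n-1) < nd$ whenever $d \ge 3$ (the $d \le 2$ case is trivial). This establishes the claimed size bound.

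Then I would show exact equivalence of the two OPF instances. Inserting the auxiliary equations \eqref{eq:ACpb}--\eqref{eq:ACPFohm} with $z_k = 0$ and $(\injp_{i^{(j)}}, \injq_{i^{(j)}}) = (0,0)$ gives $\fp_{i^{(j)}} = \sum_{k \in \Chi{i^{(j)}}} \fp_k$ and $\vs_{i^{(j)}} = \vs_k$ for every child $k$, so all copies $i^{(1)}, \ldots, i^{(m-1)}$ share a common voltage and their outgoing flow equals $\sum_{j=1}^m \fp_{k_j}$. Together with the injection equation at $i^{(1)}$ this reproduces \eqref{eq:ACpb}--\eqref{eq:ACPFohm} at the original node $i$ with children $k_1,\ldots,k_m$. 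Conversely, any solution of the original OPF lifts to the modified instance by assigning the common voltage to every $i^{(j)}$ and setting $\fp_{i^{(j)}}$ to the appropriate partial sum of child flows. The cost is preserved since all auxiliary injections are zero (and have zero cost), so optimal values agree. Finally, I would note that the voltage and flow bounds at the auxiliary nodes can be inherited from $i$ (or loosened to $[-M,M]$ as in Assumption~\ref{assump:B}), which keeps Assumptions~\ref{assump:A}--\ref{assump:B} in force for the modified instance.

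The construction is essentially bookkeeping and there is no genuinely hard step; the only mild subtlety is ensuring that the zero-impedance dummy edges do not violate the lower bound $\lb{\vs_i} \ge 1/M$ or the requirement that $|z_i| \le M$ after the transformation. The former is automatic because each $\vs_{i^{(j)}}$ equals $\vs_i$, and the latter is trivial since $|z| = 0 \le M$. So the proof is essentially a construction plus a direct check that the modified power-flow equations are algebraically equivalent to the original ones on the non-virtual variables.
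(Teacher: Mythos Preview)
Your construction is correct and proves the lemma, but it is not the one the paper uses. You replace a high-degree node by a \emph{path} of virtual copies, hanging one original child off each link; the paper instead builds a small \emph{balanced binary tree} of zero-impedance dummy buses below $i$ and attaches the original children $k_1,\ldots,k_m$ as the leaves of that binary tree. Both devices achieve maximum degree~$3$, both rely on $z=0$ dummy lines with zero injection so that the auxiliary power-flow equations telescope back to the original balance at $i$, and both stay within the $nd$ node budget. The practical difference is depth: your chain adds $m-1$ levels below $i$, whereas the paper's binary splitter adds only $\lceil\log_2 m\rceil$ levels, which keeps the message-passing schedule in Algorithm~\ref{Alg:DPMain} shallower. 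For the bare statement of Lemma~\ref{lem:DegLem} this is irrelevant, and your argument is arguably cleaner because the telescoping of the flow equations along a path is immediate.

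One small bookkeeping slip: in your node count you write $n+\sum_i(d_i-2)\le n+2(n-1)$, but if the sum really ranges over all $i$ the left side equals $n+2(n-1)-2n=n-2$, which is not what you want. The clean bound is simply that each node is replaced by at most $d_i-1\le d$ copies, so the modified tree has at most $\sum_i d_i = 2(n-1)<nd$ nodes (or just $\le nd$ by the trivial per-node bound). This does not affect the validity of your argument.
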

\begin{proof}
See Appendix Section \ref{sec:AppDeg}.	
\end{proof}

\begin{theorem}[Approximate optimality property]\label{thm:ApproxMain}
Suppose that assumptions \ref{assump:A} and \ref{assump:B} (see Section \ref{sec:OPFIntro}) hold and that the DP Algorithm \ref{Alg:DPMain} (with update rule from Algorithm \ref{Alg:DPUpdate}) is run on a tree network with maximum degree $3$ and with $0<\epsilon<1$. Let $\opt{m}_1,\opt{m}_2,\ldots,\opt{m}_n$ be the indices of the variables in the optimal solution. 	Let $\br{\opt{\vs}_i,\opt{\fpp}_i,\opt{\fpq}_i}=\mathrm{mid}\br{\eta_i\br{\opt{m}_i}}$ and $\opt{\injp}_i=\injp^i\br{\opt{m}_i},\opt{\injq}_i=\injq^i\br{\opt{m}_i}$. Then, the following guarantees hold:
\begin{itemize}
	\item[1]\emph{Approximation guarantee}: $\br{\opt{\vs},\opt{\fpp},\opt{\fpq},\opt{\injp},\opt{\injq}}$ satisfies  each constraint of \eqref{eq:OPFform} with a bounded error $\zeta \epsilon$ where $\zeta$ is a constant that depends only on $M$ (the constant from Assumption \ref{assump:B} in Section \ref{sec:OPFIntro}).
	\item[2] \emph{Runtime bound}: There is a constant $\zeta^\prime$ (depending on $M$) such that the algorithm requires at most $n  \zeta^\prime \powb{\frac{1}{\epsilon}}{5}$ calls to the $\mathrm{PropBound}$ routine.
	\item[3] \emph{Optimality guarantee}: The cost of the solution is bounded as:
\[\sum_{i=0}^n c_i\br{\opt{p}_i,\opt{q}_i} \leq \mathrm{OPT}\]
where $\mathrm{OPT}$ is the optimal cost of the original problem \eqref{eq:OPFform}.
\end{itemize}
Thus, we find a super-optimal approximately feasible solution in time linear in the size of the network and polynomial in the error tolerance.
\end{theorem}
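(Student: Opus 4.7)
The plan is to first reduce to bounded degree via Lemma~\ref{lem:DegLem}, which introduces only a constant blowup in the number of nodes, after which Algorithm~\ref{Alg:DPUpdate} applies at every internal node (at most two children). I would then treat the three parts of the theorem essentially independently. Throughout, let $\Int_i^\prime \subseteq \Xcal_i$ denote the interval selected for node $i$ in the backward pass (of radius at most $\epsilon$), and let $(\opt{\vs}_i,\opt{\fpp}_i,\opt{\fpq}_i)=\mathrm{mid}(\Int_i^\prime)$.

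For the approximation guarantee, I would argue each constraint of \eqref{eq:OPFform} separately. The bound constraints \eqref{eq:ACPFv}--\eqref{eq:ACPFf} hold exactly since $\Int_i^\prime\subseteq\Xcal_i$, the slack-bus constraint \eqref{eq:Slackbus} holds because $\xs_0$ is fixed, and the injection constraint \eqref{eq:ACPFpq} holds exactly because $(\opt{\injp}_i,\opt{\injq}_i)$ is returned by $\mathrm{PropBound}$ from the enumeration in \eqref{eq:PropBoundRelax} over pieces of $\Inj_i$. The only nontrivial constraints are the PF equations \eqref{eq:ACpb}--\eqref{eq:ACPFohm}. By property \eqref{eq:IntervalRelaxDefa} of $\mathrm{PropBound}$ there exist points $\tilde{\xs}_i\in\Int_i^\prime$, $\tilde{\xs}_k\in\Int_k$ at which the PF equations hold up to $\eta\epsilon$; passing to the midpoint moves each coordinate by at most $\epsilon$, and by Assumption~\ref{assump:B} the maps on the right-hand sides of \eqref{eq:ACpb}--\eqref{eq:ACPFohm} are Lipschitz on the admissible box with a constant $L$ depending only on $M$ (the lower bound $\vs_k\ge 1/M$ is essential because $\vs_k$ appears in denominators). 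This yields a constraint violation of at most $(\eta+L)\epsilon$, which defines $\zeta$. The hard part will be to ensure this bound is uniform in $n$: this works only because every constraint in \eqref{eq:OPFform} is local in the tree, so the error is never composed across edges.

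For the runtime bound, each partition $\Ical_i$ has $O((1/\epsilon)^3)$ cells by Assumption~\ref{assump:B} (three components, each discretized at width $\epsilon$ over a bounded interval), and each child message has at most this many pieces. A naive count of $\mathrm{PropBound}$ calls per node is $O((1/\epsilon)^9)$; I would improve this to $O((1/\epsilon)^5)$ by exploiting the Ohm relation \eqref{eq:PFnewc} in the $\mathrm{PropBound}$ subproblem, which couples $\vs_k$ to $\vs_i$ so tightly that, once the parent interval is fixed, the number of compatible child pieces collapses from $O((1/\epsilon)^3)$ to $O((1/\epsilon)^2)$ (only $\fpp_k,\fpq_k$ remain free). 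Each $\mathrm{PropBound}$ call itself is a convex program of constant size. Summing over the $O(n)$ nodes produced by Lemma~\ref{lem:DegLem} gives the advertised bound.

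For the optimality guarantee I would induct from the leaves upward on the claim that, for every piece $n$ of the message at node $i$, $\kappa^i(n)\le\kappa_i(\xs_i)$ for all $\xs_i\in\xs^i(n)$. The base case is immediate from Algorithm~\ref{Alg:DPUpdateLeaf}, which assigns $\kappa^i(n)$ the exact minimum of a linear cost over an interval. For the inductive step, property \eqref{eq:IntervalRelaxDefb} of $\mathrm{PropBound}$ bounds $c_i(\injp_i,\injq_i)$ below $\min H_i$ over the given interval, and adding the children's inductive bounds preserves the lower-bound relation in Algorithm~\ref{Alg:DPUpdate}. At the root the value $\opt{c}$ therefore satisfies $\opt{c}\le\mathrm{OPT}$, and since by construction of the backward pass $\opt{c}=\sum_i c_i(\opt{\injp}_i,\opt{\injq}_i)$, the optimality claim follows.
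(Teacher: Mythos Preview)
Your approach to the approximation and optimality guarantees is essentially the same as the paper's, and your inductive argument for optimality is in fact cleaner than the paper's somewhat informal ``the optimal solution is feasible for the relaxation'' statement. One small note: the reduction via Lemma~\ref{lem:DegLem} is already built into the hypothesis of the theorem (maximum degree $3$), so you need not invoke it again.

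There is, however, a genuine gap in your runtime argument. You fix the parent interval ($3$ coordinates: $\vs_i,\fpp_i,\fpq_i$) and then use only the Ohm relation \eqref{eq:PFnewc} to eliminate $\vs_k$ for each child, leaving $\fpp_k,\fpq_k$ free. With two children this yields $3+2+2=7$ free coordinates, i.e.\ $O\!\left((1/\epsilon)^{7}\right)$ calls, not $O\!\left((1/\epsilon)^{5}\right)$ as you claim. The paper closes this gap by \emph{also} using the flow-balance equations \eqref{eq:PFnewa}--\eqref{eq:PFnewb}: once the children's flows $\fpp_j,\fpq_j,\fpp_k,\fpq_k$ and the injection piece $t$ are fixed, the parent flows $\fpp_i,\fpq_i$ are determined up to an interval of width $\zeta(M)\epsilon$ (this is exactly where Assumption~\ref{assump:B}(5), bounding the size of each injection piece by $1/M$, is used). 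Thus the correct enumeration is over the five coordinates $\vs_i,\fpp_j,\fpq_j,\fpp_k,\fpq_k$, with $\fpp_i,\fpq_i,\vs_j,\vs_k$ all pinned to $O(1)$ candidate intervals by the four power-flow equations. Without this second reduction your argument does not reach the stated $\epsilon^{-5}$ bound.
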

\begin{proof}
See Appendix Section \ref{sec:AppMain}.
\end{proof}

\begin{remark} Theorem \ref{thm:ApproxMain} formalizes the intuition that as we use finer intervals in the Interval DP algorithm, we get closer to the optimal solution in terms of cost, and we get a tighter relaxation as well. The numerical results in Section \ref{sec:Num} show that our algorithm often finds the true optimal solutions even with a finite error tolerance.
\end{remark}


\section{Numerical Illustrations}\label{sec:Num}

In this Section, we present numerical tests of our approach on some IEEE benchmark networks - power grids with network topologies and loads that are deemed representative of real power systems. In particular, we use a set of sub-networks of the 56-bus distribution network \cite{saverioTest} (based on the IEEE 123 bus distribution feeder network \cite{IEEEdist}). We create additional subnetworks(14/30/56 bus) by aggregating nodes in the original network.
We study discrete load-curtailment problems, which are mixed integer nonconvex nonlinear optimization problems (MINLPs). We analyze a highly overloaded distribution network: a scenario that might arise just before a blackout, or after the loss of a major generator or transmission line. The goal is to curtail (reduce the consumption of) a small number of loads so that the power grid is restored to its normal operating state (bring voltages back to acceptable range). A cost is incurred for curtailing a load (typically proportional to the reduction of load). The total cost is the sum of the load-shedding costs plus a generation cost at the substation (bus $0$). The formal statement of the problem is as follows:
\begin{align*}
\mini_{\sigma,\vs,\fp} \quad &c_0\br{\injp_0,\injq_0}+\sum_{i=1}^n c_i\br{\sigma_i} 	\\
\text{ Subject to } & \eqref{eq:ACpbreac},\eqref{eq:ACpb},\eqref{eq:ACPFohm} \\
& \injp_i=\injp_i^{nom}(1-\sigma_i)+\injp_i^{red}\sigma_i,\injq_i=\injq_i^{nom}(1-\sigma_i)+\injq_i^{red}\sigma_i \\
& \sigma_i\in \{0,1\}, \vs^{L}_i \leq \vs_i \leq \vs^{U}_i
\end{align*}
The values $\injp_i^{nom},\injq_i^{nom}$ denote the nominal values of the real and reactive demands at the bus $i$. $\injp_i^{red},\injq_i^{red}$ denote the reduced (curtailed) values of the loads. $\sigma_i \in \{0,1\}$ denote the curtailment decision ($\sigma=1$ denotes curtailment). Curtailment of loads incurs a cost $c_i\br{\sigma_i}$.
   
\noindent
We run the DP algorithm (Algorithm \ref{Alg:DPMain} with update step
from Algorithms \ref{Alg:DPUpdate},\ref{Alg:DPUpdateLeaf}) on our
three test cases (14,30, and 56 buses). Ratio of DP optimum to true
optimum/upper bound, maximum constraint violation and CPU time are
studied as functions of $\epsilon$. To ensure that the results are not
artifacts of the particular test cases used, the results averaged over
$50$ instances of the problem generated by perturbing the loads in the
original problem randomly by up to $10\%$ at each bus. We show both
the mean and standard deviations of each quantity. We summarized our
observations below.

\begin{itemize}
\item Since our approach is based on a relaxation, it may produce
  infeasible solutions. However, as the radius $\epsilon$ of the
  interval discretization reduces, the degree of infeasibility, as
  measured by the maximum constraint violation, decreases. We quantify
  this dependence by taking the optimal configuration produced by the
  DP algorithm and solving the power flow equations (using Newton's
  method). We then examine if the power flow solution satisfies the
  bound constraints on voltages. Otherwise, we compute the maximum
  violation, as shown in Figures \ref{fig:14busa},\ref{fig:40busa},
  and \ref{fig:56busa}.  The results show that near-feasible
  super-optimal solutions are found by our DP algorithm consistently.

\item The other parameter of interest is the degree of
  super-optimality, which measures how close the optimal cost of the
  solution found by the DP is to the true optimal cost of the original
  problem \eqref{eq:OPFform}. For the $14$ bus network, it is feasible
  to find the true optimal cost using a brute-force search. However,
  for larger networks, we rely on the BONMIN solver to get a feasible
  solution and bound the optimality gap. The results shown in Figures
  \ref{fig:14busb},\ref{fig:40busb},\ref{fig:56busb} prove that when
  $\epsilon$ is sufficiently small, the DP algorithm optimum is within
  $.99$ of the true optimum. Note that the non-monotonic behavior of
  the optimum is due to the fact that we use an adaptive
  discretization. Even though our interval discretization gets tighter
  as $\epsilon$ gets smaller, it is not guaranteed to be a strict
  refinement of the intervals corresponding to a larger $\epsilon$.

  \item Finally, we study the dependence of the running time of the
  algorithm on $\epsilon$, in Figures
  \ref{fig:14busc},\ref{fig:40busc},\ref{fig:56busc}. The running time
  of the algorithm grows as $\epsilon$ decreases, but the plots show
  that a good optimality ratio and an acceptable error can be achieved
  with a fairly small running time of several seconds.
\end{itemize}

\begin{figure}
\centering
    \begin{subfigure}[b]{0.3\textwidth}\includegraphics[width=.95\columnwidth]{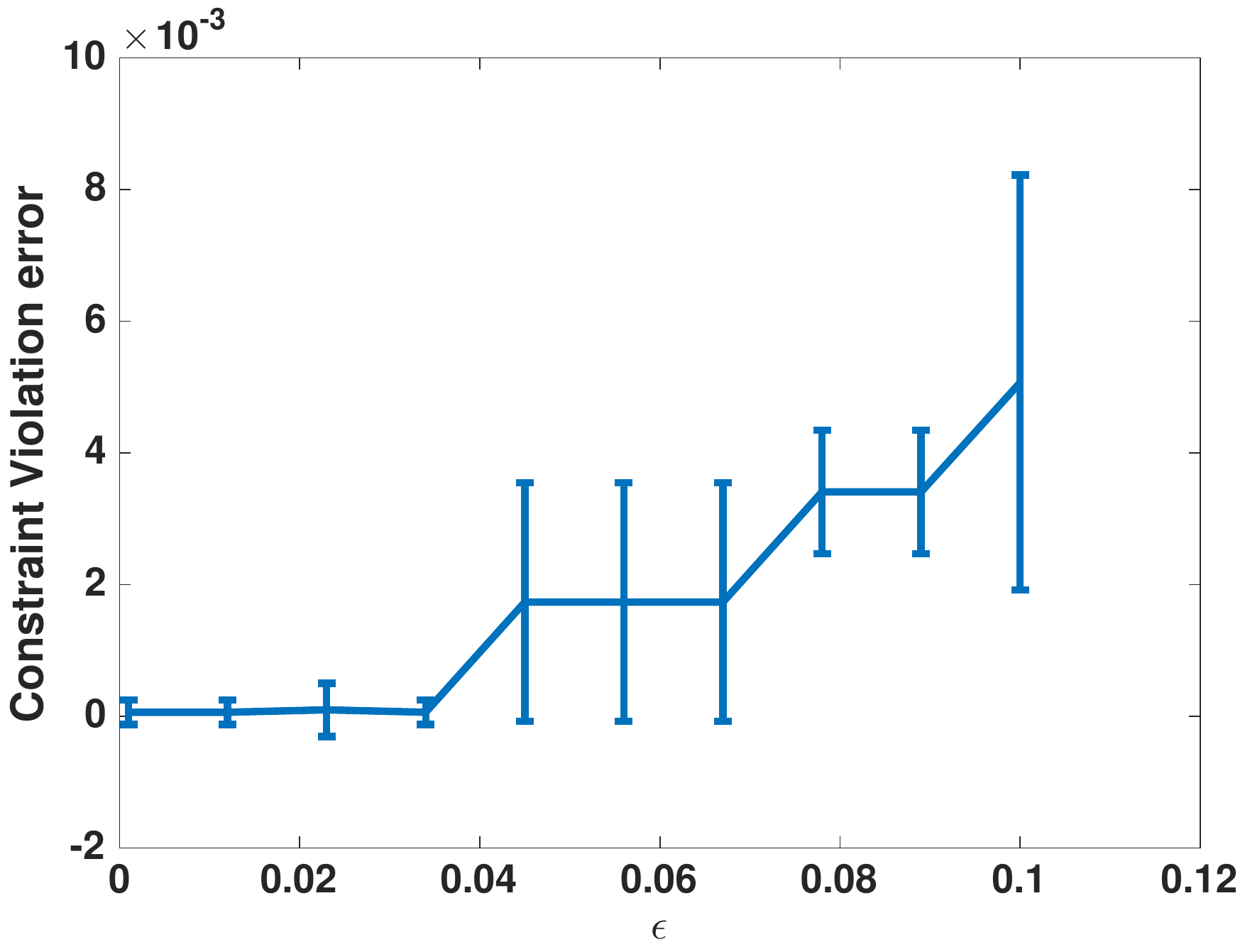}\caption{Constraint Violation}\label{fig:14busa}
    \end{subfigure}
\begin{subfigure}[b]{0.3\textwidth}
        \includegraphics[width=.95\columnwidth]{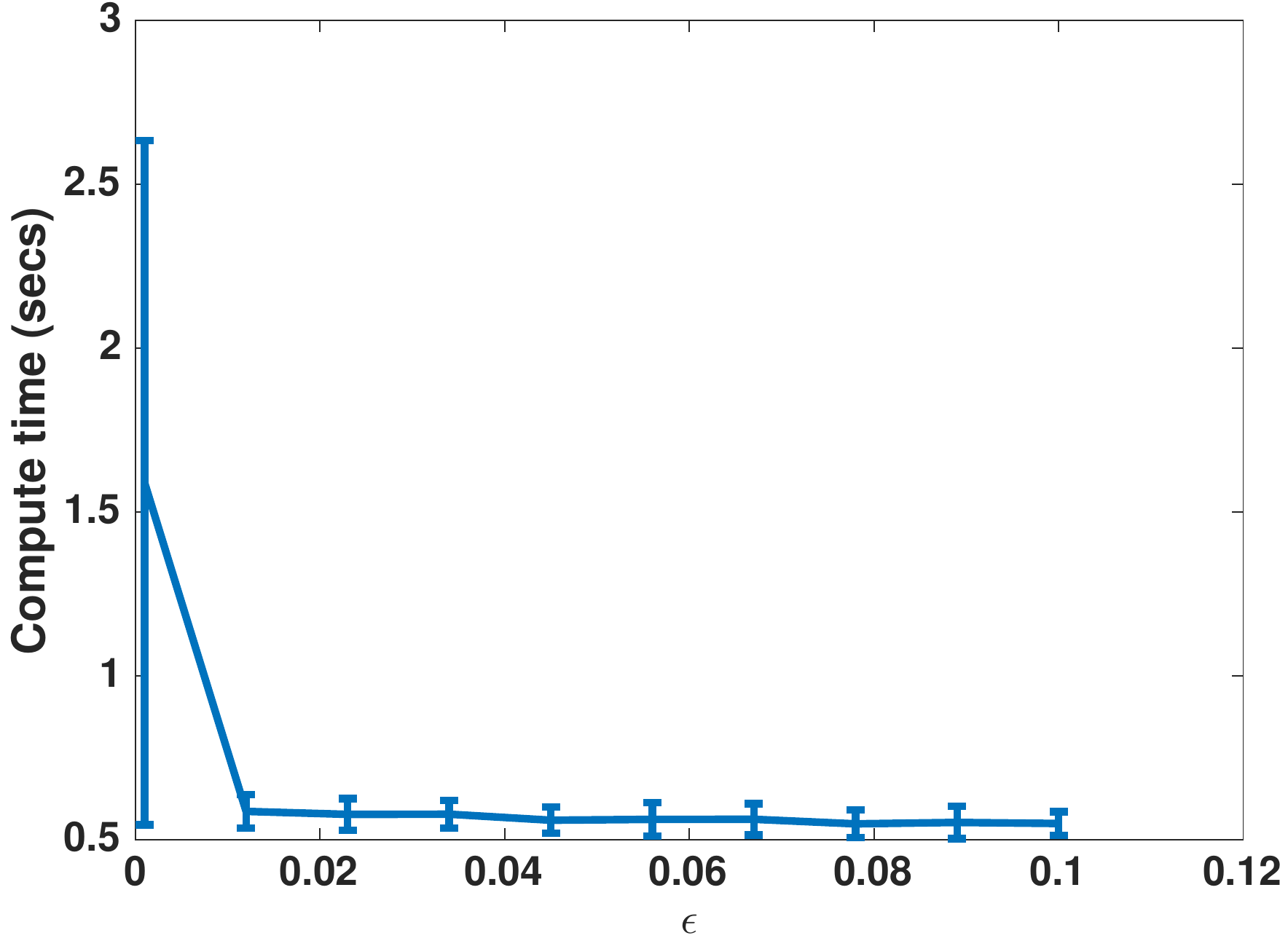}
        \caption{Computation time}\label{fig:14busb} 
        \end{subfigure}
    \begin{subfigure}[b]{0.3\textwidth}
        \includegraphics[width=.95\columnwidth]{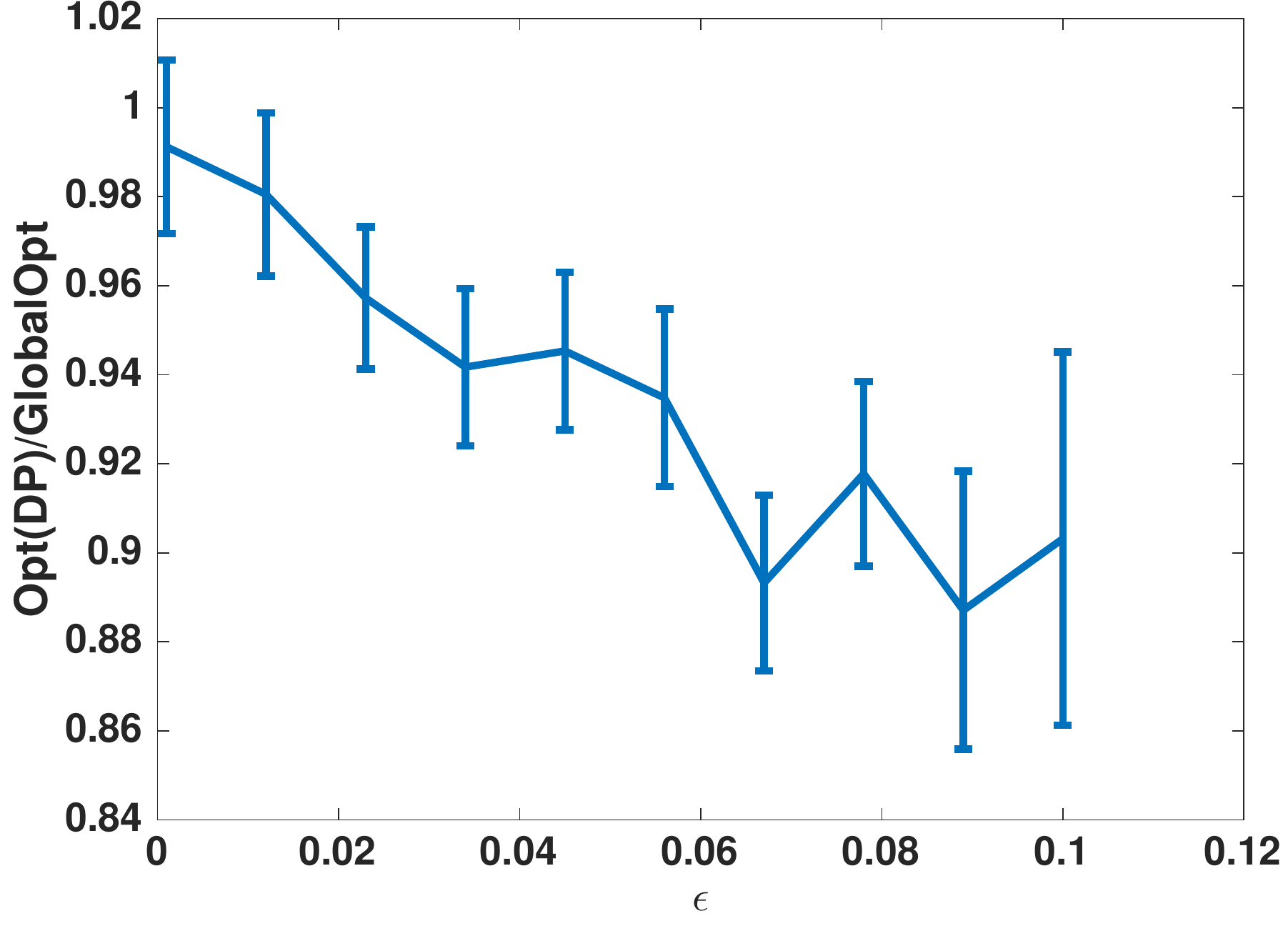}
        \caption{Superoptimality ratio}\label{fig:14busc} 
        \end{subfigure}
        \caption{14 bus network}\label{fig:ExptA}
\end{figure}
\begin{figure}
\centering
    \begin{subfigure}[b]{0.33\textwidth}\includegraphics[width=.95\columnwidth]{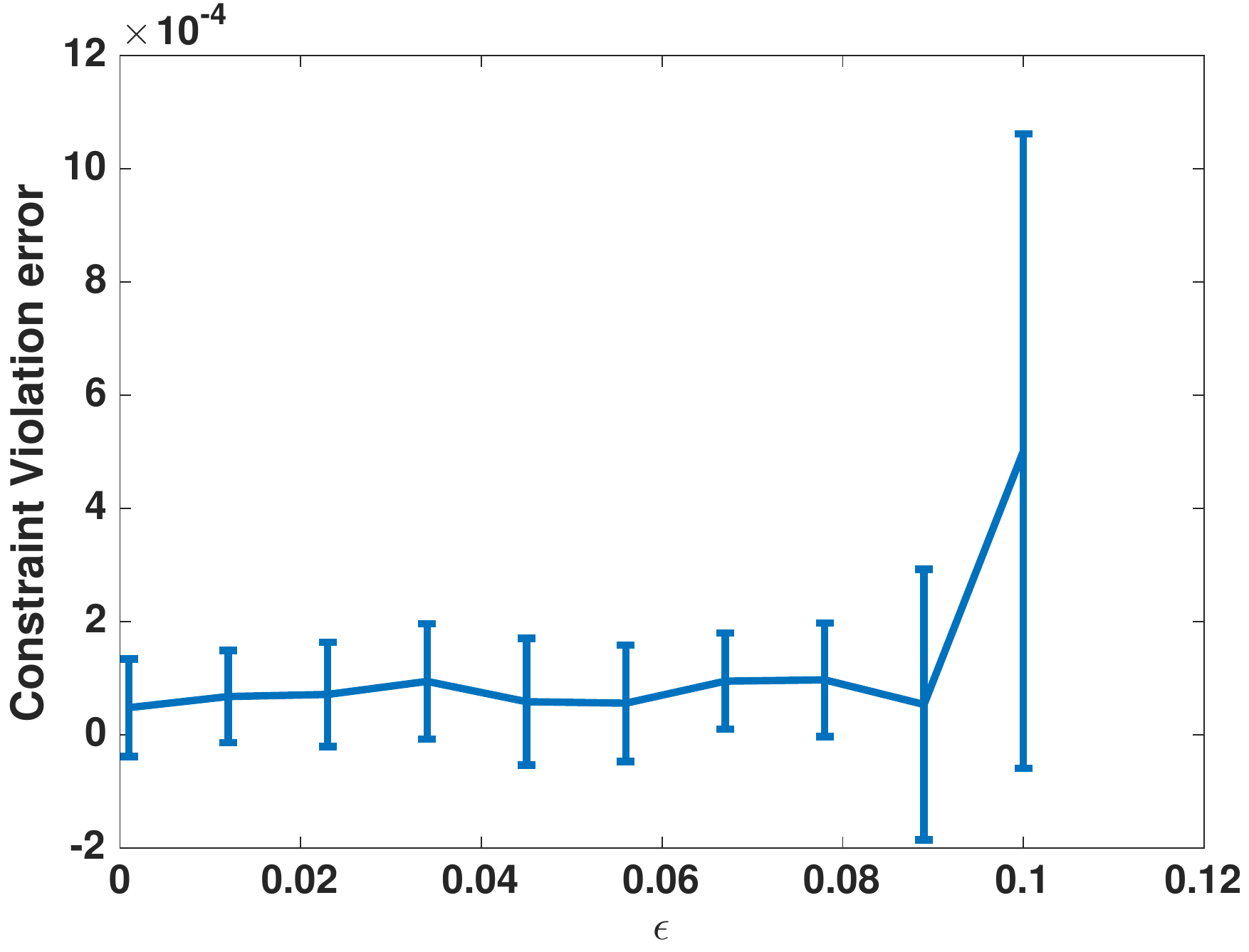}\caption{Constraint Violation}\label{fig:40busa}
    \end{subfigure}
\begin{subfigure}[b]{0.3\textwidth}
        \includegraphics[width=.95\columnwidth]{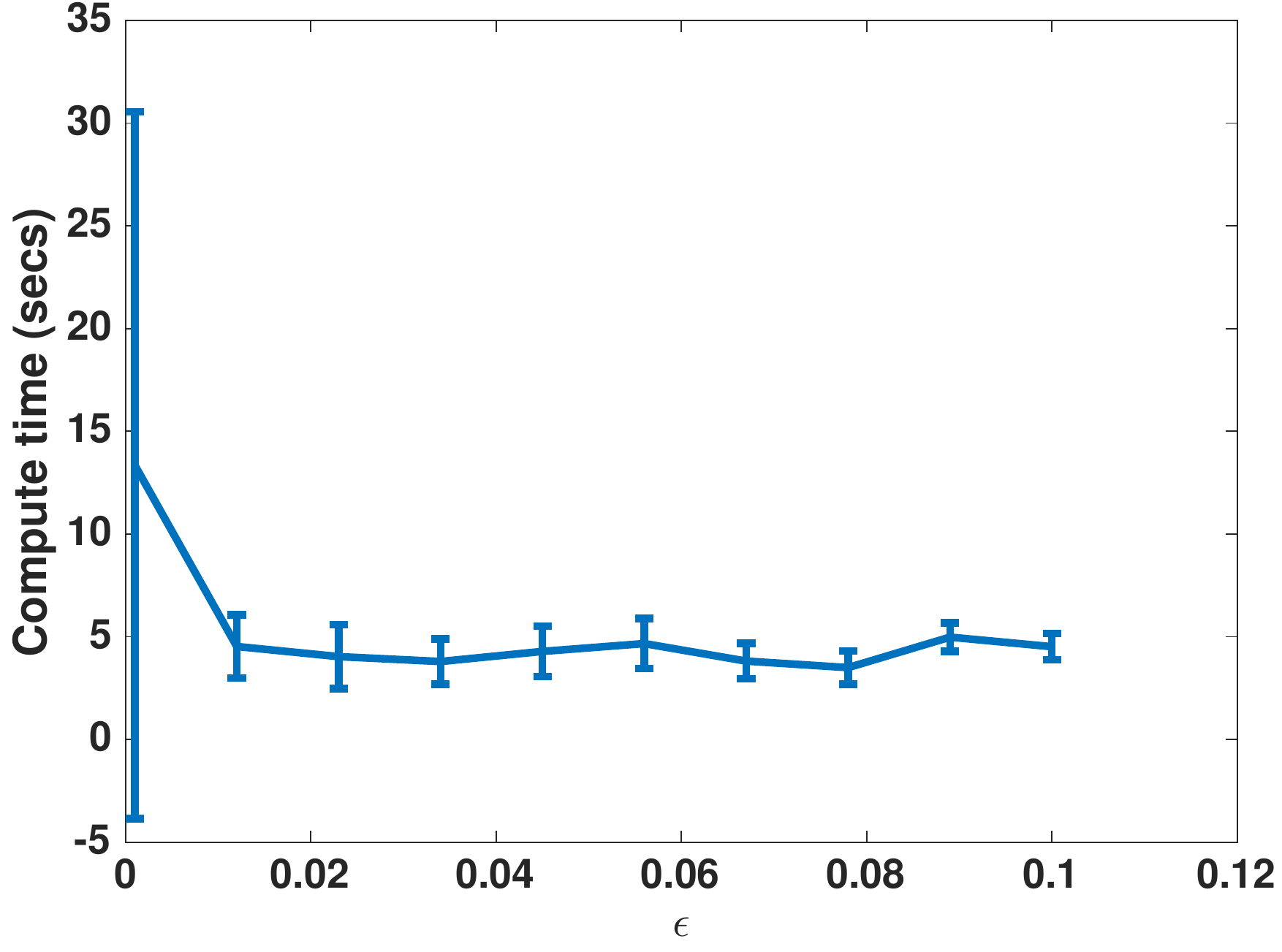}
        \caption{Computation time}\label{fig:40busb} 
        \end{subfigure}
    \begin{subfigure}[b]{0.3\textwidth}
        \includegraphics[width=.95\columnwidth]{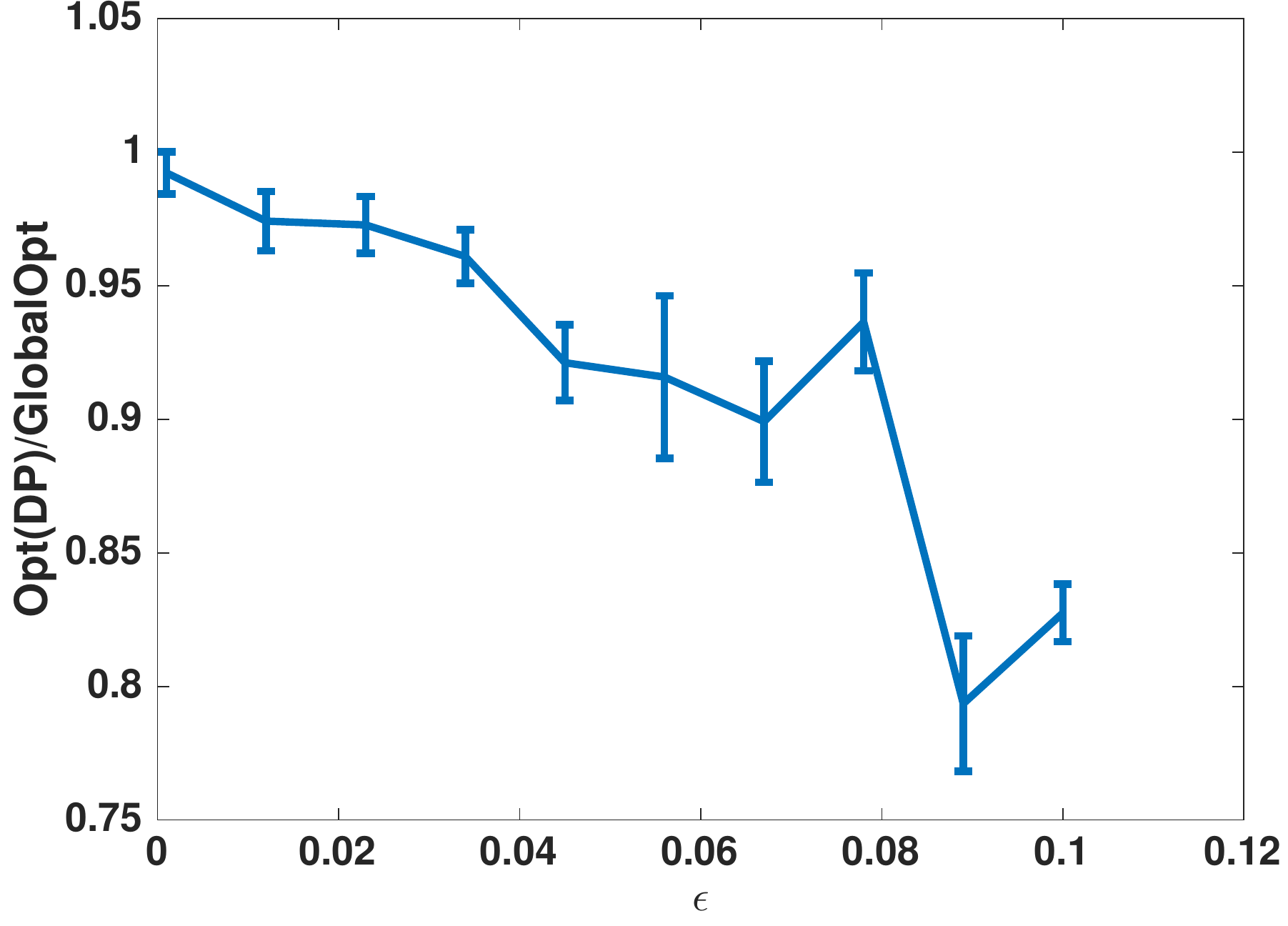}
        \caption{Superoptimality ratio}\label{fig:40busc} 
        \end{subfigure}
        \caption{40 bus network}\label{fig:ExptB}
\end{figure}

\begin{figure}
\centering
    \begin{subfigure}[b]{0.33\textwidth}\includegraphics[width=.95\columnwidth]{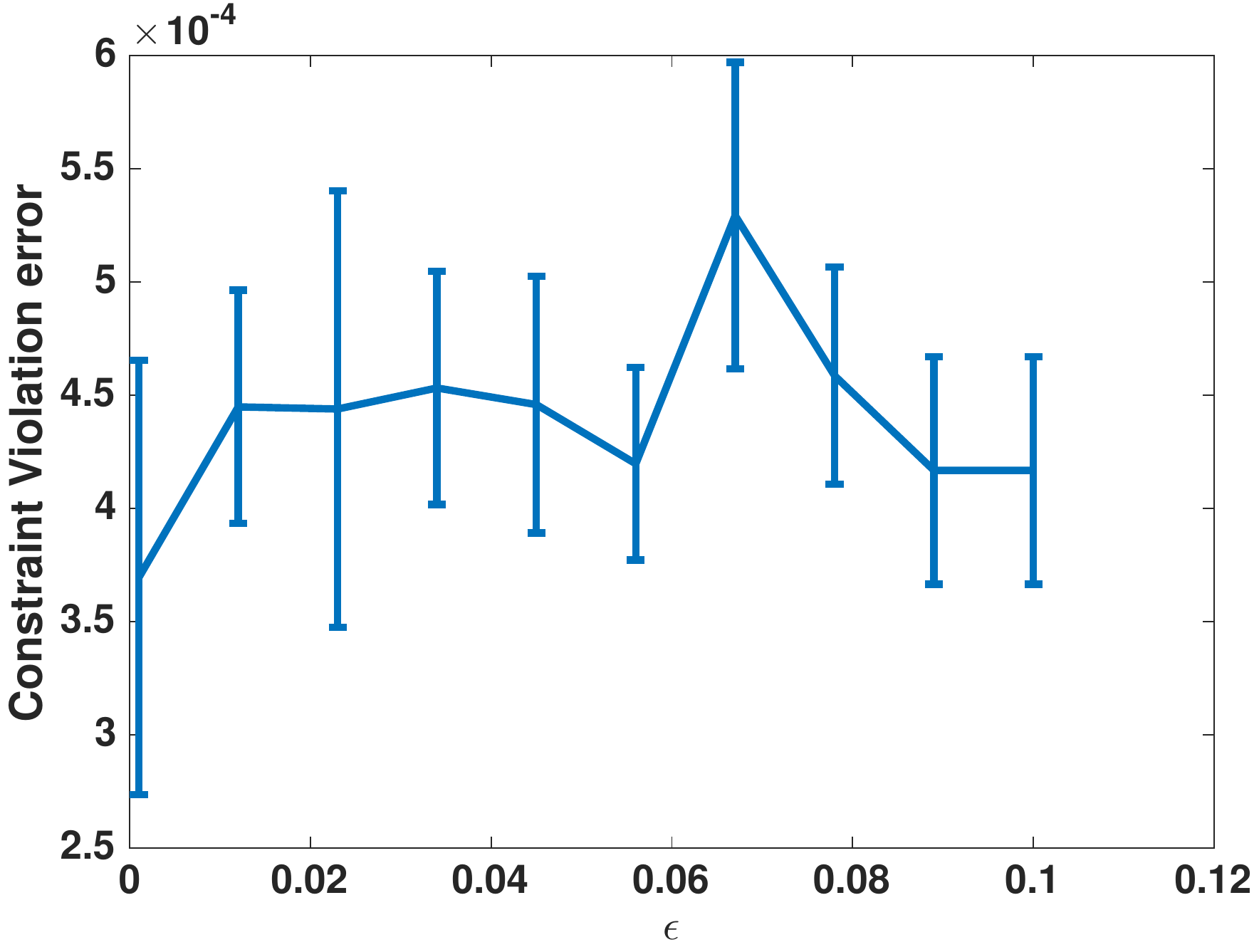}\caption{Constraint Violation}\label{fig:56busa}
    \end{subfigure}
\begin{subfigure}[b]{0.3\textwidth}
        \includegraphics[width=.95\columnwidth]{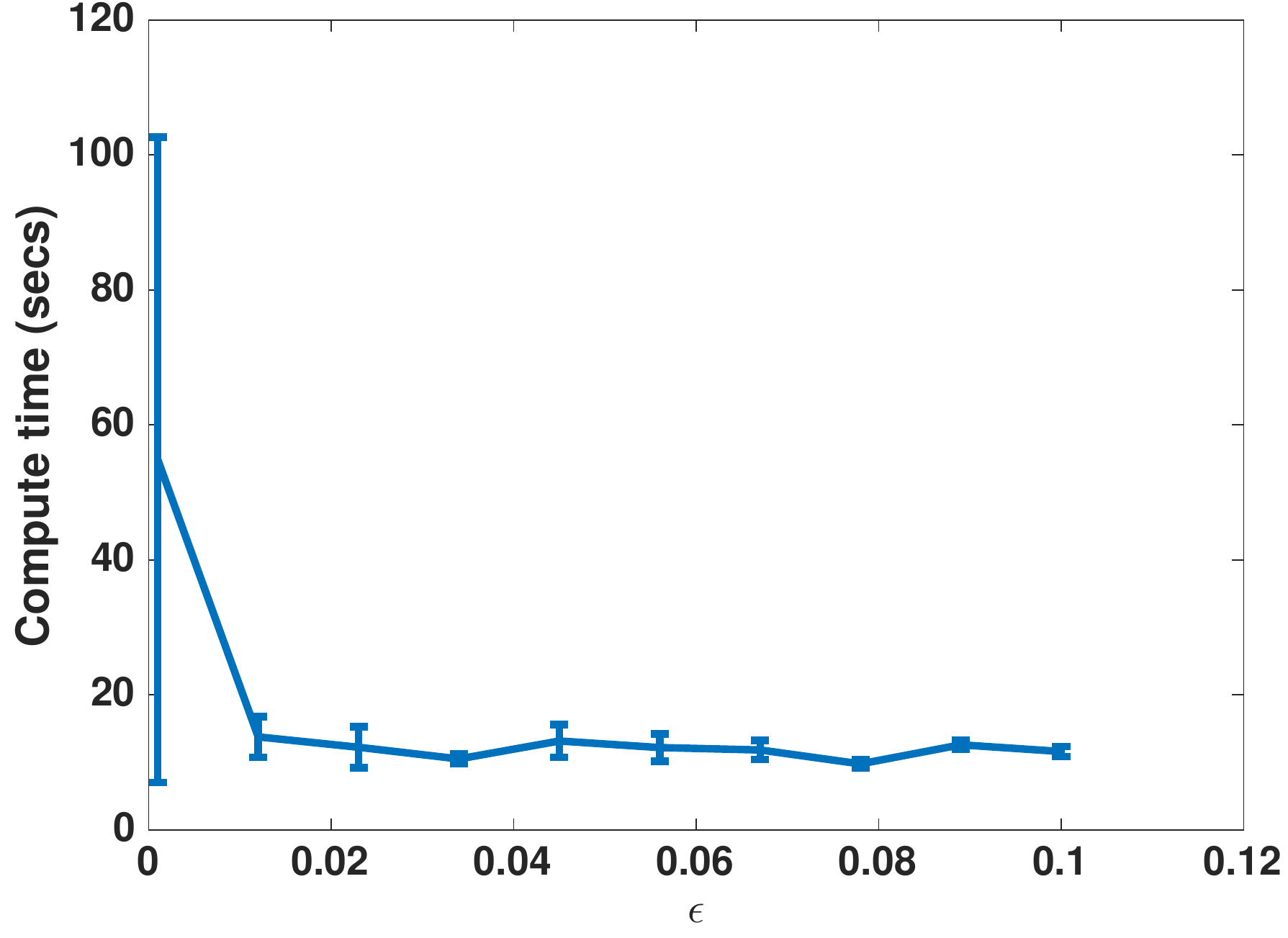}
                \caption{Computation time}\label{fig:56busb} 
        \end{subfigure}
    \begin{subfigure}[b]{0.3\textwidth}
        \includegraphics[width=.95\columnwidth]{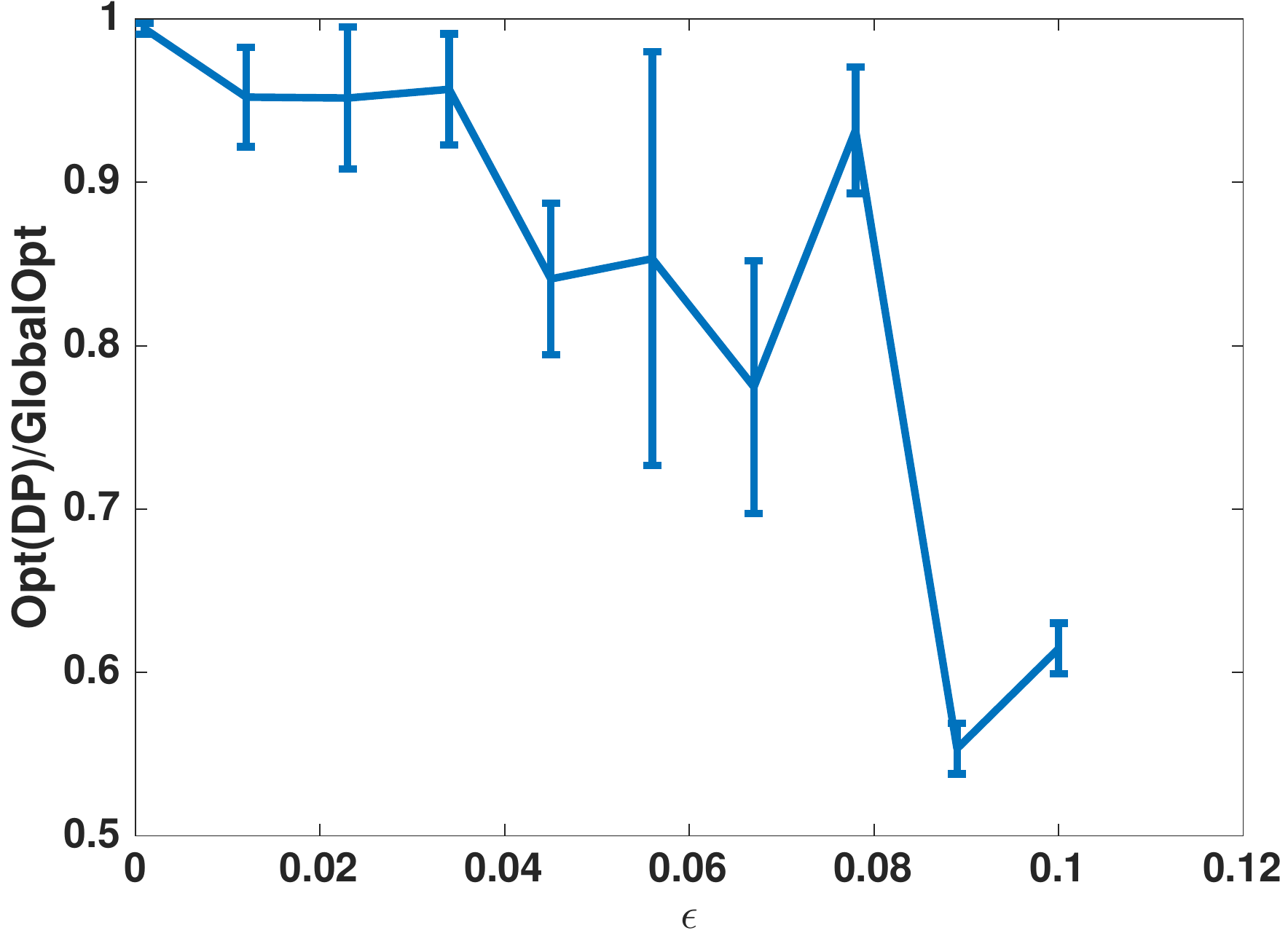}
        \caption{Superoptimality ratio}\label{fig:56busc}
        \end{subfigure}
        \caption{56 bus network}\label{fig:ExptB}
\end{figure}
	
\noindent
We also compared our algorithm to other available MINLP solvers: (1)
BONMIN \cite{bonami2008algorithmic} is a solver guaranteed to find
global optima of convex MINLPs. It can be used as a heuristic solver
for nonconvex MINLPs but with no guarantees on global optimality.  (2)
COUENNE ({\tt http://www.coin-or.org/Couenne/}) is a solver guaranteed
to find global optima of nonconvex MINLPs based on a spatial branch
and bound algorithm.  We access both solvers through the Julia
interface available via the JuMP package
\cite{DunningHuchetteLubin2015}. For the problems we studied, COUENNE
failed to converge within an acceptable time limit ($1$ hr) so we do
not report results from COUENNE. The BONMIN results are summarized in
Table \ref{tab:BONMIN}. While the BONMIN solver was faster in our
experiments, it is a heuristic solver, i.e, it is not guaranteed to
find a globally optimal solution. Indeed, BONMIN indeed fail to find
optimal solutions in the $56$ bus network. In contrast, our DP
approach always succeeds in finding a global optimum (see table
\ref{tab:DP}), although for the $56$ bus network, it requires a very
small $\epsilon$ which drives up the running time of the algorithm to
$1170s$. The reason for this behavior is that there is a super-optimal
solution that violates the voltage constraints by only $10^{-4}
\%$. This means that the discretization of voltages has to be smaller
than this for the solver to be able to recognize infeasibility of this
solution, and find the true global optimum.

\begin{table}
\begin{center}
\caption{Performance of BONMIN solver on discrete load-control: Optimality gap computed using lower bound from DP algorithm.} \label{tab:BONMIN}
\begin{tabular}{| l | l | l | l |}
\hline 
Test Case & Optimality Gap & Computation Time \\ \hline
14 bus & $0 \%$ & .1s \\ \hline 
30 bus & $0 \%$ & .2s \\ \hline 
56 bus & $0.1 \%$ & .5s \\ \hline 
\end{tabular}	
\end{center}
\end{table}

\begin{table}
\begin{center}
\caption{Performance of DP solver on discrete load-control.} \label{tab:DP}
\begin{tabular}{| l | l | l | l |}
\hline 
Solver & Test Case & Optimality Gap & Computation Time \\ \hline
DP & 14 bus & $0 \%$ & 1s \\ \hline 
DP & 30 bus & $0 \%$ & 40s \\ \hline 
DP & 56 bus & $0 \%$ & 1170s  	\\ \hline
\end{tabular}	
\end{center}	
\end{table}

\section{Conclusions}\label{sec:Conc}

We have presented a novel dynamic programming based algorithm for solving optimal power flow over tree networks. Preliminary experiments have indicated that the approach is promising and that it can solve difficult mixed integer NLP problems arising in power systems. We note that these conclusions are still preliminary and further work needs to be done to carefully test and validate the performance of this approach across a range of test problems. Overall, we envision that graphical models will be a powerful paradigm for analysis and control of power systems and other infrastructure networks. We plan to explore the following concrete directions in future work: \\
	(1) Extension to probabilistic inference problems: As solar penetration increases, the notion of security analysis (making sure that all voltages, flows, currents are within bounds) will need to be phrased in a probabilistic manner. For example, given a joint spatial distribution of solar generation at various points in the network, compute the probability that a given physical quantity (voltage/current/flow) deviates beyond its acceptable bounds. This problem can be phrased as the sum-product analog of the problem solved here. \\
	(2) Extensions to loopy graphs: There are several possibilities for extending the algorithms presented here to loopy graphs. The most straightforward extensions would be based on junction trees \cite{koller2009probabilistic} (cluster nodes into supernodes to form a tree) or on cutset conditioning \cite{dechter2003constraint} (fix values of variables on a cutset of the graph, and given for each fixed value, use inference on the remaining tree-structured graph). Another route is to use loopy belief propagation or the corresponding Linear Programming relaxation of the inference problem \cite{koller2009probabilistic}, and subsequent hierarchies of relaxations, in the spirit of \cite{sontag2010approximate}\cite{johnson2008convex}. \\
	(3) Parameterized messages: We represented messages with piecewise-constant approximations. Another option is to use a parameterized representation (polynomial/piecewise linear/piecewise polynomial for ex). An interesting related development is \cite{gamarnik2012belief}, where the authors show that belief propagation with piecewise-linear messages is guaranteed to find the global optimum of a certain special minimum cost flow problem in polynomial time. Extending this to ACOPF is another promising direction for future work.

\bibliographystyle{spmpsci}      

\bibliography{Ref}

\appendix

\section{Appendix}\label{sec:App}
\subsection{Convex Envelopes of quadratic and bilinear terms}
The nonlinearities appearing in power flow are of the form $x^2$ or $xy$ for some variables $x,y$. We use the following convex envelopes as relaxations of these nonlinear terms: 
\begin{subequations}
\begin{align}
\mathrm{SqRel}\br{y,[\lb{y},\ub{y}]}=\left\{x: \begin{array}{ll}
                  x &\geq y^2\\
                  x &\leq \br{\lb{y}+\ub{y}}y-\lb{y}\ub{y}
                \end{array}	\right\} \\
\mathrm{McCormick}\br{y,z,[\lb{y},\ub{y}],[\lb{z},\ub{z}]}=\left\{x: \begin{array}{ll}
                  x &\geq \lb{y}z + \lb{z}y - \lb{y}\lb{z}\\
                  x &\geq \ub{y}z + \ub{z}y - \ub{y}\ub{z}\\
				  x &\leq \lb{y}z - \ub{z}y + \lb{y}\ub{z}\\
				  x &\leq \lb{z}y - \ub{y}z + \lb{z}\ub{y}	
                \end{array}	\right\}              
\end{align}	
\end{subequations}

\subsection{Interval DP Relaxation}\label{sec:AppRelax}
In this section, we describe a concrete implementation of the interval relaxation procedure \eqref{eq:PropBoundRelax}:
\begin{subequations}
\begin{align}
 &\Ext_{\inj_i,\{\fp_k,\vs_k,\mathrm{Sq}^{\fpp}_k,,\mathrm{Sq}^{\fpp}_k,\mathrm{Prod}^{\vs,\cur}_k\}_{k\in \Chi{i}\cup\{i\}}}  \{\vs_i,\fpp_i,\fpq_i,a_i\br{t}\injp_i+b_i\br{t}\injq_i+c_i\br{t}\} \\
 & \text{Subject to } \nonumber\\
 & \fpp_i = \injp_i+\sum_{k \in \Chi{i}}\br{\fpp_k-\cur_k\zr_k} \label{eq:RelaxDPaa}\\
  & \fpq_i = \injq_i+\sum_{k \in \Chi{i}}\br{\fpq_k-\cur_k\zi_k} \label{eq:RelaxDPab}\\
& \vs_i = \vs_k+\cur_k\br{\zr_k^2+\zi_k^2}-2\br{\fpp_k\zr_k+\fpq_k\zi_k},k \in \Chi{i} \label{eq:RelaxDPa}\\
& \sqrt{\fpp_k^2+\fpq_k^2} \leq \sqrt{\vs_k\cur_k}\quad k\in \Chi{i}\cup\{i\} \label{eq:RelaxDPb}\\
& \mathrm{Sq}^{\fpp}_k+\mathrm{Sq}^{\fpq}_k=\mathrm{Prod}^{\vs,\cur}_k \quad k\in \Chi{i}\cup\{i\}\label{eq:RelaxDPc}\\
& \mathrm{Sq}^{\fpp}_k \in \mathrm{SqRel}\br{\fpp_k,\fpp\br{\Int_k}} \quad k\in \Chi{i}\cup\{i\} \label{eq:RelaxDPd}\\
& \mathrm{Sq}^{\fpq}_k \in \mathrm{SqRel}\br{\fpq_k,\fpq\br{\Int_k}} \quad k\in \Chi{i}\cup\{i\} \label{eq:RelaxDPe}\\
& \mathrm{Prod}^{\vs,\cur}_k \in \mathrm{McCormick}\br{\vs_k,\cur_k,\vs\br{\Int_k},\cur\br{\Int_k}}\quad k\in \Chi{i}\cup\{i\}	\label{eq:RelaxDPf} \\
&\vs_k \in \vs\br{\Int_k}, \cur_k \in \cur\br{\Int_k},\fpp_k\in \fpp\br{\Int_k},\fpq_k\in \fpq\br{\Int_k},k\in\{i\}\cup\Chi{i} \\
& \injp_i \in [\lb{\injp_i}\br{t},\ub{\injp_i}\br{t}],\injq_i \in [\lb{\injq_i}\br{t},\ub{\injq_i}\br{t}]
\end{align}\label{eq:RelaxDP}	
\end{subequations}

This requires solution of a small number of SOCPs within each DP update (specifically $10$ SOCPs in $6d$ variables where $d$ is the maximum degree of a node in the tree - note that we can always choose $d\leq 2$ by modifying the original problem as in lemma \ref{lem:DegLem}). Note also that as the intervals $\Int_k$ get smaller, the relaxation gets tighter - this is formalized in the lemma below:

\begin{lemma}\label{lem:AppRelax}
The relaxation defined by \eqref{eq:RelaxDP} is a valid interval relaxation 	and satisfies the conditions of the definition \ref{def:IntervalRelax}.
\end{lemma}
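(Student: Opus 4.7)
The plan is to verify in turn the three conditions \eqref{eq:IntervalRelaxDefa}, \eqref{eq:IntervalRelaxDefb}, and \eqref{eq:IntervalRelaxDefc} of Definition \ref{def:IntervalRelax} for the relaxation \eqref{eq:RelaxDP}, exploiting the standard fact that McCormick and square envelopes approach the true nonlinear surfaces at a rate linear in the widths of the intervals on which they are built.

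First I would dispense with \eqref{eq:IntervalRelaxDefc} and \eqref{eq:IntervalRelaxDefb}, which are essentially structural and follow from the observation that \eqref{eq:RelaxDP} is a genuine \emph{relaxation}: any tuple $(\vs_k,\fpp_k,\fpq_k,\cur_k)$ satisfying the exact physical constraints $\vs_k\cur_k=\fpp_k^2+\fpq_k^2$ (with $\cur_k=(\fpp_k^2+\fpq_k^2)/\vs_k$) also satisfies the envelope constraints \eqref{eq:RelaxDPc}--\eqref{eq:RelaxDPf} by standard properties of $\mathrm{SqRel}$ and $\mathrm{McCormick}$, together with the SOCP constraint \eqref{eq:RelaxDPb}. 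Hence every $\xs_i$ with $H_i(\xs_i,\xs_{\Chi{i}})<\infty$ for some $\xs_{\Chi{i}}\in\prod_k\Int_k$ is feasible for \eqref{eq:RelaxDP}; since $\Int_i^\prime$ is obtained by $\Ext$-ing $(\vs_i,\fpp_i,\fpq_i)$ subject to $\xs_i\in\Int_i$, we get both $\Int_i^\prime\subseteq\Int_i$ and the containment in \eqref{eq:IntervalRelaxDefc}. The super-optimality \eqref{eq:IntervalRelaxDefb} then follows because the cost $c_i(\injp_i,\injq_i)$ is minimized over a superset of the original feasible set (using the piecewise-linearization over $t\in\{1,\ldots,|\Inj_i|\}$ from Assumption \ref{assump:A}).

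The main work is \eqref{eq:IntervalRelaxDefa}, the tightness bound. I would first establish an auxiliary estimate: for any $(y,u)\in\mathrm{SqRel}(y,[y_L,y_U])$ with $y\in[y_L,y_U]\subseteq[-M,M]$ and $|y_U-y_L|\le\epsilon$, one has $|u-y^2|\le M\epsilon$ (directly from the two inequalities defining $\mathrm{SqRel}$, whose gap is $(y_U-y)(y-y_L)\le\epsilon\cdot 2M$). An analogous estimate holds for McCormick: $|u-yz|\le 2M\epsilon$ when $y,z\in[-M,M]$ and both interval widths are at most $\epsilon$. Combining these with the equality \eqref{eq:RelaxDPc} yields the key bound
\begin{equation*}
\left|\vs_k\cur_k-(\fpp_k^2+\fpq_k^2)\right|\le \zeta_1 \epsilon,
\end{equation*}
for a constant $\zeta_1$ depending only on $M$. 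Dividing by $\vs_k$ and using $\vs_k\ge 1/M$ from Assumption \ref{assump:B} gives $|\cur_k-(\fpp_k^2+\fpq_k^2)/\vs_k|\le M\zeta_1\epsilon$. Substituting this into \eqref{eq:RelaxDPaa}, \eqref{eq:RelaxDPab} and comparing with the definitions of $\injp_i(\xs_i,\xs_{\Chi{i}})$ and $\injq_i(\xs_i,\xs_{\Chi{i}})$, and using $|z_k|\le M$, produces $|\injp_i-\injp_i(\cdot)|,|\injq_i-\injq_i(\cdot)|\le \zeta_2\epsilon$. The voltage-drop inequality in \eqref{eq:IntervalRelaxDefa} is then immediate from \eqref{eq:RelaxDPa}, since the discrepancy $(\fpp_k^2+\fpq_k^2)/\vs_k$ vs.\ $\cur_k$ is the only difference between \eqref{eq:RelaxDPa} and the exact Ohm's-law relation \eqref{eq:ACPFohm}, and is already bounded above.

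The main obstacle I anticipate is bookkeeping: showing that all the error propagation constants consolidate into a single $\eta$ depending only on $M$, and ensuring that the chain of inequalities (square envelopes $\to$ equation \eqref{eq:RelaxDPc} $\to$ McCormick inversion to recover $\vs_k\cur_k$) compounds only additively rather than multiplicatively in ways that could introduce hidden dependence on $n$. Once the $O(\epsilon)$ closeness in the current-voltage product is established, the final step --- passing from any feasible point of \eqref{eq:RelaxDP} in $\Int_i^\prime\times\prod_k\Int_k$ to the configuration required by \eqref{eq:IntervalRelaxDefa} --- is essentially automatic from the constraints of the relaxation.
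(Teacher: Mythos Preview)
Your proposal is correct and follows essentially the same route as the paper: dispose of \eqref{eq:IntervalRelaxDefb} and \eqref{eq:IntervalRelaxDefc} via the relaxation-superset observation, then bound the $\mathrm{SqRel}$ and $\mathrm{McCormick}$ envelope errors to get $|\vs_k\cur_k-(\fpp_k^2+\fpq_k^2)|\le\zeta\epsilon$ and propagate through \eqref{eq:RelaxDPaa}--\eqref{eq:RelaxDPa}; your explicit division by $\vs_k\ge 1/M$ and substitution into the balance equations is in fact more detailed than the paper's own sketch. One small correction: only $\vs_k,\fpp_k,\fpq_k$ are $\epsilon$-discretized, so the $\cur_k$ interval need not have width $\epsilon$---but since the McCormick gap is bounded by the \emph{product} of the two widths, $\epsilon$ times a data-bounded $\cur_k$-range still yields $O(\epsilon)$, which is exactly how the paper closes the estimate.
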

\begin{proof}
Through this proof, we use $\eta\br{M}$ to refer to some constant that depends on the number $M$ from assumption \ref{assump:B}. Properties \eqref{eq:IntervalRelaxDefb},\eqref{eq:IntervalDefb} are obvious since \eqref{eq:RelaxDP} is a valid relaxation of the problem \eqref{eq:PropBound}. The property \eqref{eq:IntervalDefa} follows from the tightness of the McCormick relaxation. If $\Int_i,\br{\Int_k	}_{k\in \Chi{i}}$ are of radius at most $\epsilon$, we know that 
\[\lb{\vs_k}\cur_k+\lb{\cur_k}\vs_k-\lb{\vs_k}\lb{\cur_k} \leq \mathrm{Prod}^{\vs,\cur}_k \leq \ub{\vs_k}\cur_k+\lb{\cur_k}\vs_k-\ub{\vs_k}\lb{\cur_k}\]
so that the range of $\mathrm{Prod}^{\vs,\cur}_k$ is of size at most 
\[\br{\ub{\vs_k}-\lb{\vs_k}}\br{\cur_k-\lb{\cur_k}}\leq \br{\ub{\vs_k}-\lb{\vs_k}}\br{\ub{\cur_k}-\lb{\cur_k}}\leq \eta \epsilon\] since $\cur_k$ has upper and lower bounds depending on the problem data. Thus, we know that $\mathrm{Prod}_k^{\vs,\cur}$ is at most $\eta\br{M} \epsilon$ away from $\vs_k\cur_k$. 
Similarly, $\mathrm{Sq}^{\fpp_k}$ is at most $\eta\br{M}\epsilon$ away from $\fpp_k^2$, $\mathrm{Sq}^{\fpq_k}$ is at most $\eta\br{M}\epsilon$ away from $\fpq_k^2$. Combining these results, we get the that \eqref{eq:IntervalDefa} holds.
\end{proof}

\subsection{Bound-tightening procedure}\label{sec:Relax}
We use a scheme similar to the one described in \cite{coffrin2015strengthening} that infers bounds on the variables $\vs_i,\fpp_i,\fpq_i\cur_i$ given the constraints in the ACOPF problem \eqref{eq:DPMain} . Let $\mathrm{Conv}\br{\Inj_i}$ denote the convex hull of the set $\Inj_i$. We use a convex relaxation of the constraints \eqref{eq:OPFform} that depends on the variable bounds. We then iterate this procedure where we use a relaxation to infer tighter bounds, and then tighten the relaxation using the inferred bounds. In practice, we find that the procedure converges in a few iterations typically to a stable set of bounds.

The nonconvex constraint $\fpp_k^2+\fpq_k^2=\vs_k\cur_k$ can be relaxed to a convex constraint: $\sqrt{\fpp_k^2+\fpq_k^2}\leq\sqrt{\cur_k\vs_k}$. This can be tightened by replacing the nonlinear terms in the equation $\powb{\fpp_k}{2}+\powb{\fpq_k}{2}=\vs_k\cur_k$ with their McCormick envelopes. Plugging all this into a single formulation, we obtain:
\begin{subequations}
\begin{align}
\Ext_{\vs,\cur,\fp,\inj,\mathrm{Sq}^{\fpp},\mathrm{Sq}^{\fpq},\mathrm{Prod}^{\vs,\cur}} &\quad  \{\vs_i,\cur_i,\fpp_i,\fpq_i\}_{i=1}^n \\
\text{ Subject to } & \fp_i-\sum_{k \in \Chi{i}} \br{\fp_k-\cur_kz_k} \in \mathrm{Conv}\br{\Inj_i}, i \in \{0,\ldots,n\} \\
& \vs_i = \vs_k+\cur_k|z_k|^2-2\br{\fpp_k\zr_k+\fpq_k\zi_k}, i\in \{0,\ldots,n\}, k \in \Chi{i} \\
& \sqrt{\fpp_k^2+\fpq_k^2} \leq \sqrt{\vs_k\cur_k}, k\in \{1,\ldots,n\} \\
& \mathrm{Sq}^{\fpp}_k+\mathrm{Sq}^{\fpq}_k=\mathrm{Prod}^{\vs,\cur}_k \quad k\in \{1,\ldots,n\}\\
& \mathrm{Sq}^{\fpp}_k \in \mathrm{SqRel}\br{\fpp_k,[\lb{\fpp_k},\ub{\fpp_k}]}, k\in \{1,\ldots,n\}\\
& \mathrm{Sq}^{\fpq}_k \in \mathrm{SqRel}\br{\fpq_k,[\lb{\fpq_k},\ub{\fpq_k}]}, k\in \{1,\ldots,n\}\\
& \mathrm{Prod}^{\vs,I}_k \in \mathrm{McCormick}\br{\vs_k,\cur_k,[\lb{\vs_k},\ub{\vs_k}],[\lb{\cur_k},\ub{\cur_k}]}, k\in \{1,\ldots,n\} 
\end{align}	\label{eq:BoundTighten}
\end{subequations}
Each minimum/maximum value involves solving a Second Order Cone Program (SOCP) and can be done in parallel over the variables involved. This entire procedure can be viewed as a mapping:

\[{\begin{pmatrix} \lb{\vs} & \ub{\vs} & \lb{\fp} & \ub{\fp} & \lb{\cur} & \ub{\cur} \end{pmatrix}}_t \mapsto {\begin{pmatrix} \lb{\vs} & \ub{\vs} & \lb{\fp} & \ub{\fp} & \lb{\cur} & \ub{\cur} \end{pmatrix}}_{t+1}\]

We iterate this mapping until there the improvement in bounds is smaller than some threshold. The obtained bounds are used to redefine the domains $\Xcal_i$ for each of the variables.

\subsection{Proof of Lemma \ref{lem:DegLem}}\label{sec:AppDeg}
We describe a transformation that takes a node $i$ with $m$ children and adds at most $r=\lceil\log_2\br{m}\rceil$ additional buses to create a new network where each node has at most $2$ children. 
\noindent
We add children in ``levels'' $p=1,\ldots,r$: At level $1$, we add children $c_{10},c_{11}$  connected to bus $i$ by $0$-impedance transmission lines. We have the following constraints between $i$ and its children ${\Chi{i}}^{\prime}=\{c_{10},c_{11}\}$:
\begin{align*}
\fp_i & =\inj_i+\sum_{k \in {\Chi{i}}^{\prime}}\fp_k \\
\vs_i & =\vs_k, k \in \Chi{i}^{\prime}	\\
\cur_k\vs_k &= |\fp_k|^2, k \in {\Chi{i}}^{\prime}
\end{align*}
\noindent
At any level $p \leq r$, all nodes are of the form $c_{i_1\ldots i_p}$. We add its children $\Chi{c_{i_1\ldots i_p}}=\{c_{i_1\ldots i_p 0},c_{i_1\ldots i_p 1}\}$ connected to it by $0$-impedance lines with the constraints:
\begin{align*}
\fp_{c_{i_1\ldots i_p}} & =\sum_{k \in \Chi{c_{i_1\ldots i_p}}}\fp_k \\
\vs_{c_{i_1\ldots i_p}} & =\vs_k, k \in \Chi{c_{i_1\ldots i_p}}	\\
\cur_k\vs_k &= |\fp_k|^2, k \in \Chi{c_{i_1\ldots i_p}}
\end{align*}

At the final level $p=r-1$, every node is of the form $c_{i_1\ldots i_{r-1}}$ and its children are picked from the set of original children $\Chi{i}$. One way of doing this is to assign children in order: $\Chi{c_{10\ldots 0}}=\{c_1,c_2\},\Chi{c_{10\ldots 1}}=\{c_3,c_4\},\ldots$. Then, we add the balance equations:
\begin{align*}
\fp_{c_{i_1i_2\ldots i_{r-1}}} & =\sum_{k \in \Chi{c_{i_1i_2\ldots i_{r-1}}}}\br{\fp_k} \\
\vs_i & =\vs_k, k \in \Chi{i}^{\prime}	\\
\cur_k\vs_k &= |\fp_k|^2, k \in {\Chi{i}}^{\prime}
\end{align*}

Adding the power balance equations at all the intermediate buses, we recover the original power balance condition
\[\inj_i=\sum_{k \in \Chi{i}} \br{\fp_k-z_k\cur_k} \]
Further, we have that $\vs_i=\vs_{c_{i_1i_2\ldots i_{p}}}$ for every $1\leq p \leq r-1$.

\begin{figure}
\begin{center}
\includegraphics[width=.75\textwidth]{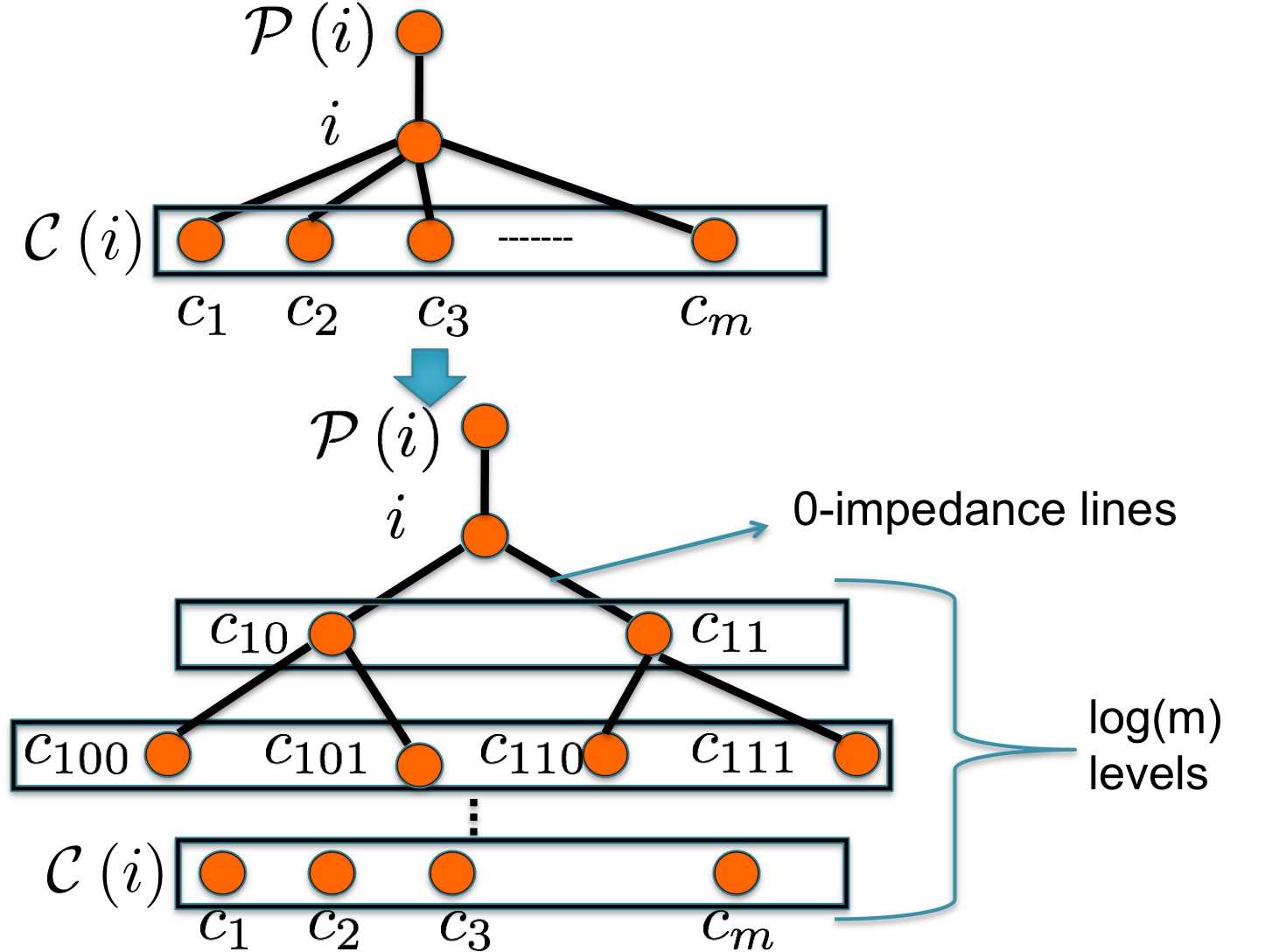}	 	
\end{center}
\end{figure}

\subsection{Proof of Theorem \ref{thm:ApproxMain}}\label{sec:AppMain}
\begin{proof}
Through this proof, we will use $\zeta\br{M}$ to denote an arbitrary function of $M$. The proof of the algorithm breaks down to three key statements:

\begin{itemize}
	\item[1] The size of the messages $|\eta^i|$ is bounded by $\frac{\zeta\br{M}}{\epsilon^3}$. 
	\item[2] For each message, $\mathrm{Rad}\br{\xs^i\br{t}}\leq \zeta\br{M}\epsilon,t=1\ldots,|\xs^i|$.
	\item[3] In the interval DP update Algorithm \ref{Alg:DPUpdate} makes at most $\frac{\zeta\br{M}}{\epsilon^5}$ calls to the $\mathrm{PropBound}$ routine. 
\end{itemize}

\emph{Proof of 1,2}\\
For the leaf nodes, the size of the messages is bounded by $|\Inj_i|\br{\frac{\ub{\vs_i}-\lb{\vs_i}}{\epsilon}+1}\leq \frac{\zeta\br{M}}{\epsilon}$ (since $\ub{\vs}_i\leq M, \lb{\vs_i} \geq \frac{1}{M},|\Inj_i|\leq M$). Since $0<\epsilon<1$, this is smaller than $\frac{\zeta\br{M}}{\epsilon^3}$. Further, since $\ub{\injp_i}\br{t}-\injp_i\br{t},\ub{\injq_i}\br{t}-\injq_i\br{t}\leq \frac{1}{M}$, we know that $\mathrm{Rad}\br{\xs^i\br{t}}\leq \epsilon$ for each $t=1,\ldots,|\xs^i|$. 
For non-leaf node, the size of the messages is bounded by the size of $|\mathrm{Partition}\br{\Xcal_i,\epsilon}|\leq\frac{\zeta\br{M}}{\epsilon^3}$. Further, since $\xs^i\br{t} \subset \Int_i \in \mathrm{Partition}\br{\Xcal_i,\epsilon}$, we know that $\mathrm{Rad}\br{\xs^i\br{t}}\leq \epsilon$.

Thus, using \eqref{eq:IntervalRelaxDef}, the error of the PF equations $|\injp_i-\injp_i\br{\xs_i,\xs_{\Chi{i}}}|,|\injq_i-\injq_i\br{\xs_i,\xs_{\Chi{i}}}|,|\vs_i-\vs_i\br{\xs_k}|$ can be bounded by $\zeta\br{M}\epsilon$. This proves claim $1$ of the theorem. Further, since at each step we propagate all interval regions consistent with at least one interval value of the child, the optimal solution to the original problem is feasible for the interval relaxation. Thus, for each we have that $c_i\br{\injp_i^\ast,\injq_i^\ast}\leq c_i\br{\injp_i^{OPT},\injq_i^{OPT}}$. Adding this over all $i$ gives us claim $2$ of the theorem.

Finally, we show that the DP update can be implemented with $\frac{\zeta\br{M}}{\epsilon^5}$ calls to the $\mathrm{PropBound}$ routine. In the loops in algorithm \ref{Alg:DPMain}, we are implicitly looping over possible interval values of $\vs_i,\fpp_i,\fpq_i,\vs_j,\fpp_j,\fpq_j,\vs_k,\fpp_k,\fpq_k$. However, these variables are linked by the constraints:
\begin{align*}
\fpp_i &=\injp_i+\br{\fpp_k-\zr_k	\frac{\fpp_k^2+\fpq_k^2}{\vs_k}}+\br{\fpp_j-\zr_j	\frac{\fpp_j^2+\fpq_j^2}{\vs_j}}	\\
\fpq_i & =\injq_i+\br{\fpq_k-\zi_k	\frac{\fpp_k^2+\fpq_k^2}{\vs_k}}+\br{\fpq_j-\zi_j	\frac{\fpp_j^2+\fpq_j^2}{\vs_j}} \\
\vs_i & =\vs_k+\br{\zr_k^2+\zi_k^2}\frac{\br{\fpp_k^2+\fpq_k^2}}{\vs_k}-2\br{\fpp_k\zr_k+\fpq_k\zi_k} \\
\vs_i & =\vs_j+\br{\zr_j^2+\zi_j^2}\frac{\br{\fpp_j^2+\fpq_j^2}}{\vs_j}-2\br{\fpp_j\zr_j+\fpq_j\zi_j} 
\end{align*}
Thus, if $\injp_i,\injq_i$ are fixed, the $9$ variables are constrained to lie on a $5$-dimensional manifold (since there are 4 non-redundant . This suggests that we only need to do an exhaustive search over a $5$-dimensional space rather than a $9$ dimensional space. 

Suppose $\injp_i\in [\lb{\injp_i}\br{t},\ub{\injp_i}\br{t}],\injq_i\in [\lb{\injq_i}\br{t},\ub{\injq_i}\br{t}]$ and we fix particular interval values (of radius smaller than $\epsilon$) for variables $\vs_i,\fpp_k,\fpq_k,\fpp_j,\fpq_j$. Then, from the first two equations, we know that $\fpp_i,\fpq_i$ must lie in an interval of size $\zeta\br{M}\epsilon$ (since $\injp_i,\injq_i$ lie in intervals of size $\zeta\br{M}$). Thus, we need to loop over at most $\zeta\br{M}$ possible values of $\fpp_i,\fpq_i$. Similarly, if $\vs_i,\fpp_k,\fpq_k$ are fixed to interval values of radius $\epsilon$, the third equation says that $\vs_k$ must lie in an interval of size $\zeta\br{M}\epsilon$, and similarly if $\vs_i,\fpp_j,\fpq_j$ are fixed to intervals of radius of $\epsilon$, $\vs_j$ must lie in an interval of size $\zeta\br{M}\epsilon$. Thus, we need to loop over at most $\zeta\br{M}$ possible values of $\vs_j,\vs_k,\fpp_i,\fpq_i$ once the values of $\vs_i,\fpp_k,\fpq_k,\fpp_j,\fpq_j$ are fixed to intervals of radius $\epsilon$. Finally, the total number of loops in algorithm \ref{Alg:DPUpdate} is at most $\frac{\zeta\br{M}}{\epsilon^5}$. Adding this over all nodes of the network, we get the third claim of Theorem \ref{thm:ApproxMain}.

\end{proof}

\end{document}